\documentclass[12pt,reqno]{amsart}
\usepackage[utf8]{inputenc}
\usepackage{color}
\usepackage{dutchcal}
\usepackage[toc]{appendix}
\usepackage{amsfonts,amsmath,centernot}
\usepackage{latexsym}
\usepackage{stmaryrd}
\usepackage{amssymb}
\usepackage{amsmath,amssymb,amsthm,hyperref,mathrsfs}
\usepackage{graphicx}
\usepackage{times}
\usepackage{fullpage}
\usepackage{latexsym}

\makeatletter
\newsavebox{\@brx}
\newcommand{\llangle}[1][]{\savebox{\@brx}{\(\m@th{#1\langle}\)}%
  \mathopen{\copy\@brx\kern-0.5\wd\@brx\usebox{\@brx}}}
\newcommand{\rrangle}[1][]{\savebox{\@brx}{\(\m@th{#1\rangle}\)}%
  \mathclose{\copy\@brx\kern-0.5\wd\@brx\usebox{\@brx}}}
\makeatother

\newtheorem{Theorem}{Theorem}[section]
\newtheorem{Proposition}[Theorem]{Proposition}
\newtheorem {Cor}[Theorem]{Corollary}
\newtheorem {pro}[Theorem]{Proposition}

\newtheorem {Hypothesis}[Theorem]{Hypothesis}
\newtheorem {Lemma}[Theorem]{Lemma}
\newtheorem {rem}[Theorem]{Remark}
\newtheorem*{example*}{Example}

\newtheorem {rems}[Theorem]{Remarks}
\newtheorem {com}[Theorem]{Comment}
\newtheorem {coms}[Theorem]{Comments}
\newtheorem {notation}[Theorem]{Notation}
\newtheorem {Definition}[Theorem]{Definition}

\newcommand{\bcom}{\begin{com} \rm } \newcommand{\ecom}{\end{com}}
\newcommand{\bcoms}{\begin{coms} \rm } \newcommand{\ecoms}{\end{coms}}
\newcommand {\bdef}{\begin{Definition}}
\newcommand {\edefi}{\end{Definition}}
\newcommand {\bl}{\begin{Lemma}}
\newcommand {\el}{\end{Lemma}}
\newcommand {\bethe}{\begin{Theorem}}
\newcommand {\eethe}{\end{Theorem}}
\newcommand {\bp}{\begin{pro}}
\newcommand {\ep}{\end{pro}}
\newcommand {\bcor}{\begin{Cor}}
\newcommand {\ecor}{\end{Cor}}
 \newcommand {\brem }{\begin{rem} \rm }
\newcommand {\erem }{\end{rem}}
 \newcommand {\brems }{\begin{rems} \rm }
\newcommand {\erems }{\end{rems}}

\newcommand{\ba}{\begin{array}}
\newcommand{\ea}{\end{array}}

\newcommand {\be}{\begin{equation}}
\newcommand {\ee}{\end{equation}}
\newcommand {\bde}{\begin{displaymath}}
\newcommand {\ede}{\end{displaymath}}
\newcommand {\beq}{\begin{eqnarray*}}
\newcommand {\eeq}{\end{eqnarray*}}
\newcommand {\beqa}{\begin{eqnarray}}
\newcommand {\eeqa}{\end{eqnarray}}

\def\proof{\noindent {\it Proof. $\, $}}
\def\finproof{\hfill $\Box$ \vskip 5 pt}
\def \ind{1\!\!1}

%%%%%%%%%%%%%%%%%%%%%%%%%%%%%%%%
% \mathbb \newcommand{\bE}{\mathbb{E}}

\newcommand{\bG}{\mathbb{G}}
\newcommand{\bF}{\mathbb{F}}

\newcommand{\bP}{\mathbb{P}}
\newcommand{\bQ}{\mathbb{Q}}

\newcommand{\E}{\mathbb{E}}

% \mathcal

\newcommand{\F}{\mathcal{F}}
\newcommand{\G}{\mathcal{G}}

\newcommand{\nhookrightarrow}{\centernot{\hookrightarrow}}

 \begin{document}

\title[]{Insiders and their Free Lunches: the Role of Short Positions}

\author{Delia Coculescu}
\address{University of Zurich\\ Department of Banking and Finance, Plattenstrasse 32\\ Z\"{u}rich 8032, Switzerland.}
\email{delia.coculescu@bf.uzh.ch}
\author{Aditi Dandapani}
%\address{Columbia University\\Statistics Department}
\email{ad2259@caa.columbia.edu}
\begin{abstract}Given a stock price process, we analyse the potential of  arbitrage in a context of short-selling prohibitions.  We introduce the notion of minimal supermartingale measure, and we analyse its properties in connection to the minimal martingale measure. This question is more specifically analysed in the case of an investor having additional, inside information. In particular, we establish conditions when  minimal martingale and supermartingale measures both fail to exist.  These correspond to the case when the insider information includes some non null events that are perceived as having null probabilities by the uninformed market investors, even as they cannot observe them. The results may have different applications, such as in problems related to the local risk-minimisation for insiders whenever strategies are implemented without short selling. 

 \end{abstract}
\maketitle
\section{Introduction}

Given a stock price process, we analyse the potential of  arbitrage by insiders, when they chose not to sell short. Short-selling stocks when having access to  a value-destroying event before it is known to the public leads to profits, but insiders that are already detaining the stocks may profit without engaging in selling short.  Selling (and buying) of company shares by people who have special information because they are involved with the company is illegal in many jurisdictions, whenever the information is non-public and material. 
Insider trading scandals are nevertheless regularly on the front page of newspapers.
For instance, the US justice department is currently investigating alleged insider trading by US lawmakers who sold stocks just before the coronavirus pandemic sparked a major market downturn. Rarely short selling is part of the insiders' strategies in these scandals. Short selling transactions receive additional attention from supervisory authorities in their attempt to detect abnormal trading activity and it is possible that many insiders refrain from such strategies for their trades to remain undetected.   It is not clear nevertheless, from both a practical and purely theoretical perspective if an insider that does not detain the stock has the possibility to first buy and then sell the stock in a profitable manner, thus exploiting the informational advantage without selling short.   

The aim of this paper is to propose some first elements in tackling this question. For the notion of no arbitrage profits, we consider the setting of no free lunch with vanishing risks with short sales prohibitions (NFLVRS) introduced in Pulido \cite{Pulido14},  a restriction of the classical notion of no free lunch with vanishing risks (NFLVR) by Delbaen and Schachermeyer \cite{DelbScha94}, \cite{DelbScha98}.  Based on  previous work by Jouini and Kallal \cite{JouiKall95}, Fritelli \cite{Fritelli97}, Pham and Touzi \cite{PhamTouz99}, Napp \cite{Napp03} and Karatzas and Kardaras \cite{KaraKard07}, the paper by  Pulido  \cite{Pulido14} establishes important properties of price processes under short sale prohibitions namely the equivalence between (NFLVRS) and the existence of an equivalent supermartingale measure for the price processes. Structure conditions for a price process under (NFLVRS) are provided in Coculescu and Jeanblanc \cite{CocJean19}. Recently, Platen and Tappe \cite{PlattenTappe} raised a similar question to the one we are analysing here, namely if exploiting arbitrages requires short-selling, however without insider information being specifically analysed. It is shown that without any possibility of selling short, it is not possible to exploit arbitrages. But at the difference of their paper, we allow the investor to borrow risk-free (that is equivalent selling short risk-free assets), and this assumption results in additional possibilities for exploiting possible arbitrages. Hence, the results in \cite{PlattenTappe} do not apply to our framework. 

 There is a very vast literature studying arbitrage possibilities and characterising no-arbitrage models, for different types of  inside information and for different notions of arbitrage, sometimes weaker that no free lunch with vanishing risks. Usually,  a model with respect to a given information set (filtration)  is specified  on a probability space, representing the financial market such as common investors perceive it, and the insider information is introduced as a strictly larger information set, via an enlargement of the original filtration. A fundamental question in this context is  whether the additional information allows for arbitrage profits. 
We mention the early work by Grorud and Pontier \cite{GrorudPontier},  Amendinger et al. \cite{AmeImkSch98}, Imkeller \cite{Imkeller02} and subsequent work by  Hillairet \cite{Hill05}, Acciaio et al. \cite{AcciFontKard16}, Aksamit et al. \cite{AksaChouJean15}, Fontana et al \cite{FontJeanSong14}, Chau et al \cite{ChauTank18}, among athers.
 In a general setting, the situations where the additional information of an insider is insufficient for obtaining profits of arbitrage in the form of no free lunch with vanishing risk,  were characterised in the literature and linked to the so-called (H) hypothesis holding between the asset price filtration and the larger filtration containing the additional information, under an equivalent martingale measure (see Blanchet-Scalliet and Jeanblanc \cite{bsjeanblanc} and Coculescu et al. \cite{CocJeanNik12}).  
    
 The present paper contributes to the literature on insider trading by  studying specifically the short-sales restrictions.   
Within the two-filtration setting above described, with a filtration larger than the other, and where a stock price is an adapted process for both filtrations, we shall  assume that no free lunch with vanishing risks prevails for the common investors having access only to the smallest filtration. We will then exclusively focus on the no arbitrage conditions in the larger filtration, that is, when considering the insider information. As mentioned above, the existence of a supermartingale measure for the stock price is equivalent to (NFLVRS) (Pulido \cite{Pulido14}). After we introduce the notion of minimal supermartingale measure, we analyse some of  its connections with the minimal martingale measure. The results in Section \ref{Sec3} are actually general  and do not depend on the two-filtration setting. In Section \ref{secH}, a further analysis of the connections between the two measures is provided, specific to the the case of an insider, when assuming the smaller filtration has the predictable representation property with respect to a continuous martingale.  The failure of both the minimal martingale measure and the minimal supermartingale measure to exist is obtained when  the (H) hypothesis is satisfied under a probability measure such that the original one is absolutely continuous to it, but not equivalent. The additional null sets of the original measure belong to the inside information: they are not measurable in the smaller filtration that models the information of the common investors. Some new results on the importance of the (H) hypothesis may also be found in Section \ref{secH}, in particular for the case where (NFLVRS) holds but  (NFLVR) fails to hold.  This corresponds to the case where the arbitrages cannot be exploited without selling the stock short. In the Conclusions section, we shall discuss some possible extensions and uses of our results.

\section{The mathematical setting}\label{SectionSetting}
A probability space $(\Omega, \G, \bP )$ is fixed, endowed with a right-continuous filtration $\bF^0=(\F_t^0)_{t\geq 0}$.  We denote by $\bF=(\F_t)_{t\geq 0}$ the augmentation of  $\bF^0$ with the null sets of $\bP$,  ensuring that $\bF$ satisfies the usual assumptions.  We consider a financial market where a risky asset (e.g. a stock) is traded, in addition to a risk-free asset and such that  the price process, denoted by $X=(X_t)_{t\geq 0}$ is adapted.  To keep the message simple, we assume that the price of the risk-free asset is constant. Alternatively, one could consider the discounted asset price.   Through this paper, we make the following assumption:

\begin{Hypothesis}\label{H1} There exists an equivalent local martingale measure for $X$ in the filtration $\bF$. \end{Hypothesis} 

%Without loss of generality, we will suppose that $\bP$ is an equivalent martingale measure for $X$ in the filtration $\bF$.   
Hypothesis \ref{H1} more precisely means:  there exists a probability $\bP^*\sim \bP$ on $(\Omega,\F_\infty)$ such that $X$ is a $(\bF,\bP^*)$ local martingale. 
We emphasise that it is only required the equivalence of these measures on $\F_\infty$. Implicit to Hypothesis \ref{H1} is also that  $X$ is a  locally bounded $(\bF,\bP)$ semimartingale. We have thus a standard arbitrage free market in the sense of no free lunch with vanishing risks \cite{DelbScha94}, whenever the filtration $\bF$ summarises the available information. We shall call \textit{common investors} the $\bF$ informed investors.

We now take the point of view of a particular agent, that we name the \textit{insider}. We assume that the insider has access to some information that can potentially impact the risky asset price, should this information be publicly released. Hence, the insider may have access to some forms or arbitrage, even under Hypothesis \ref{H1}, as the asset price does not necessarily incorporate the extra information.  In mathematical terms, there is a right continuous filtration $\bG^0$ strictly larger than $\bF^0$, i.e., such that 
\begin{equation}\label{fg0}
\F^0_t\subset \G^0_t\;, \;t\geq 0\text{ and }\bF^0\neq \bG^0.
\end{equation}We define $\bG$ as the $\bP$ augmentation of $\bG^0$, so that $\bG$ satisfies the usual assumptions.  The insider is assumed to be able to use $\bG$ adapted strategies in order to make investments in $X$ and the risk-free asset. We do not assume any particular structure of the filtration $\bG$, except being strictly larger than $\bF$, in the sense of (\ref{fg0}). The filtrations $\bF^0$ and $\bG^0$ will still play a role later on, as we consider probabilities to which $\bP$ is only absolutely continuous. For now, the framework is that of the probability space $(\Omega,\G,\bP)$ endowed with two filtrations $\bF\subset\bG$.

Importantly, we shall assume a certain structure for the decomposition of the stochastic process $X$ in the insider filtration $\bG$, namely:
\begin{Hypothesis}\label{H2}  The price of the risky asset $X$ is a nonnegative $(\bG,\bP)$ semimartingale  of the form: 
\be\label{Xdecomp}
X=X_0+M+ \int \alpha d\langle M\rangle
\ee
for some $(\bG,\bP)$ local martingale $M$ and a $(\bG,\bP)$ predictable process $\alpha$. The process $\langle M\rangle$, the sharp bracket  of $M$,  is  the $(\bG,\bP)$ dual predictable projection of the quadratic variation of $M$. 
For simplicity, and without loss of generality, we shall assume $\alpha$ to be constantly zero on the intervals where $d\langle M\rangle=0$.
\end{Hypothesis}

Before considering the question of whether the insider can have free lunches, one should clarify the available trading strategies for the insider. The insider is supposed to invest in the risky and risk free asset based on her information. We are interested here in two alternative situations: the one where the insider may sell short the risky asset and the one where she is constrained to have only long positions in the risky asset, i.e., short sales are prohibited. Consequently, we define the following sets for $t\geq0$:
\begin{align*}
\mathcal M_t(\bG,\bP):&=\{\bQ\sim\bP\ \text{ on }\G_t\;|\; X_{\cdot\wedge t} \text{ is a $(\bG,\bQ)$ local martingale}\} \text{ and }\mathcal M(\bG,\bP):=\mathcal M_\infty(\bG,\bP);\\
\mathcal{S}_t(\bG,\bP):&= \{\bQ\sim\bP \text{ on }\G_t\;|\; X_{\cdot\wedge t}  \text{ is a $(\bG,\bQ)$ supermartingale}\}\text{ and }\mathcal{S}(\bG,\bP):=\mathcal{S}_\infty(\bG,\bP).
\end{align*}

Obviously, we have that
\[
\mathcal M_t(\bG,\bP)\subset \mathcal{S}_t(\bG,\bP)\quad t\geq 0.
\]

In this paper we intend to analyse from the insider's perspective the connections between arbitrages involving short selling the risky asset and arbitrages that exclude short selling the risky asset.  The no arbitrage conditions we study are relative  to the insider filtration $\bG$, and they are formally the following:
\begin{itemize}
\item[(a)] the no free lunch with vanishing risks (NFLVR) condition of Delbaen and Schachermayer \cite{DelbScha94},  in the case where short positions are permitted along with the long positions in the assets. In our setting, (NFLVR) is equivalent to: $\mathcal M(\bG,\bP)\neq\emptyset$. 
\item[(b)] the no free lunch with vanishing risk under short sales prohibition (NFLVRS), a condition that was introduced in Pulido \cite{Pulido14}. In our setting,  by short sales prohibitions we mean that long positions in the risky asset are permitted, along with short and long in the risk-free asset, and in this case (NFLVRS) is equivalent to: $\mathcal{S}(\bG,\bP)\neq\emptyset$ (cf. Theorem 3.9 in \cite{Pulido14}). 
\end{itemize}
   
\begin{notation}\label{notation}Given a $\bG$-adapted process $Y=(Y_t)_{t\geq 0}$, we use the following notation
\begin{itemize}
\item[-] $\;^{o,\bP}Y$ stands for the optional projection of the process $Y$ onto the filtration $\bF$, with respect to the measure $\bP$;
\item[-]  whenever $Y$ is  increasing: $Y^{o,\bP}$ stands for its $(\bF,\bP)$-dual optional projection; $Y^{p,\bP}$ stands for its $(\bF,\bP)$-dual predictable projection;   $\{dY> 0\}$ stands for the support of the random measure $dY$.  Hence, there exists a unique subset of $\Omega\times \mathbb R_+$ that we denote   $\{dY=0\}$ and satisfying:  $\{dY=0\}\cup\{dY>0\}=\Omega\times \mathbb R_+$ and  $\{dY=0\}\cap\{dY>0\}=\emptyset$. 
\item[-] whenever $Y$ is a local martingale, we denote by $\langle Y\rangle$ its sharp bracket with respect to $\bG$. If $Y$ is an $\bF$ adapted local martingale, we denote by $\llangle Y\rrangle$ its sharp bracket with respect to $\bF$. In other words, we have $\llangle Y\rrangle=([Y])^{p,\bP}$. We emphasise that whenever $Y$ is continuous,  we simply have: $\llangle Y\rrangle= \langle Y\rangle=[Y]$.
% (there reference probability will be unambiguous, as implicit to the selected local martingale).
\item[-] whenever $Y$ is a semimartingale with $Y_0=0$, $\mathcal E (Y)$ denotes the Dol\'eans-Dade exponential.
\end{itemize}
\end{notation}

Some results will be established in the particular settings where the process $X$ has continuous sample paths, or $\bF$ enjoys the predictable representation property, that is, we have a complete $\bF$ market. 
We label these particular settings as follows:
\begin{itemize}
\item[(C)] The stochastic process $X$ has continuous sample path.
\item[$\bF$-(PRP)] The filtration $\bF$ is the natural filtration of the price process $X$. There is a (one dimensional) $(\bF,\bP)$ local martingale $m^{\bF,\bP}$ such that $m^{\bF,\bP}$ satisfies the $(\bF,\bP)$ predictable representation property, that is: any $(\bF,\bP)$ local martingale vanishing at zero is equal to $H\cdot m^{\bF,\bP}$ for a suitable $\bF$ predictable process $H$.
\end{itemize}

\bigskip

In the remaining of this section, we emphasise some implications of our modelling assumptions. 
The choice of assuming  a decomposition of $X$ as proposed in (\ref{Xdecomp}) is rooted in the result below:
\begin{Theorem}\label{ThmXdecomp} Suppose that Hypothesis \ref{H1} holds. In addition, suppose that at least one of the following two conditions is satisfied:
\begin{itemize}
\item[1.] $\mathcal M(\bG,\bP)\neq\emptyset$
\item[2.] (C) and $\mathcal{S}(\bG,\bP)\neq\emptyset$.
\end{itemize}
Then,  Hypothesis \ref{H2} holds true. 
\end{Theorem}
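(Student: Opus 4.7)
My strategy is to obtain the canonical $(\bG,\bP)$-semimartingale decomposition $X = X_0 + M + A$ and then show that the predictable finite-variation part $A$ is absolutely continuous with respect to the sharp bracket $d\langle M\rangle$. In either case, the equivalence $\bQ \sim \bP$ on $\G_\infty$ together with $X$ being a $(\bG,\bQ)$ (super)martingale ensures $X$ is a $(\bG,\bP)$-semimartingale, and local boundedness inherited from Hypothesis~\ref{H1} makes it special, so the decomposition exists.

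For Case~(1), let $Z$ denote the density process of $\bQ \in \mathcal M(\bG,\bP)$ with respect to $\bP$. Girsanov's theorem implies that matching the $\bP$- and $\bQ$-canonical decompositions of $X$ forces the finite-variation correction to vanish under $\bQ$, i.e.\ $A = -\int \tfrac{1}{Z_-}\,d\langle M,Z\rangle$. Kunita-Watanabe then gives $d\langle M,Z\rangle \ll d\langle M\rangle$, so $A = \int \alpha\,d\langle M\rangle$ for a predictable $\alpha$, establishing Hypothesis~\ref{H2}.

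For Case~(2), continuity of $X$ forces $M$ and $A$ to be continuous and yields the pathwise identity $\langle M\rangle = [X]$. Applying the Case~(1) argument inside $\bF$ to the ELMM $\bP^*$ supplied by Hypothesis~\ref{H1}, I first obtain the $\bF$-structure condition $X = X_0 + M^{\bF} + A^{\bF}$ with $M^{\bF}$ a continuous $(\bF,\bP)$-local martingale satisfying $\llangle M^{\bF}\rrangle = [X]$ and $A^{\bF} = \int \alpha^{\bF}\,d\llangle M^{\bF}\rrangle$. Since $A^{\bF}$ is $\bG$-predictable of finite variation, uniqueness of the $\bG$-canonical decomposition of $X$ forces $M^{\bF} = M + (A - A^{\bF})$, identifying $A - A^{\bF}$ as the ``information drift'' of the $\bF$-local martingale $M^{\bF}$ in the enlarged filtration $\bG$. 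The closing observation is that on the pathwise set $\{d\langle M\rangle = 0\} = \{d[X] = 0\}$, both continuous local martingales $M$ (in $\bG$) and $M^{\bF}$ (in $\bF$) are locally constant, by the classical fact that a continuous local martingale is constant on intervals where its sharp bracket does not grow; hence $d(M^{\bF} - M) = d(A - A^{\bF}) = 0$ there, and combined with $dA^{\bF} \ll d\langle M\rangle$ this yields $dA \ll d\langle M\rangle$ and Hypothesis~\ref{H2}. The main obstacle is precisely this control of the information drift, which in general enlargements need not be absolutely continuous with respect to $d\langle M\rangle$; the continuity assumption~(C) is essential, since it produces the pathwise identifications $\langle M\rangle = [X] = \llangle M^{\bF}\rrangle$ and thereby transfers the $\bF$-structure condition into $\bG$ without any further structural hypothesis on the enlargement (such as Jacod's criterion or the (H) hypothesis).
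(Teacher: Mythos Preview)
Your treatment of Case~1 is essentially the Ansel--Stricker/Schweizer argument that the paper cites: Girsanov identifies $A$ with $-\int Z_{-}^{-1}\,d\langle M,Z\rangle$, and Kunita--Watanabe gives $d\langle M,Z\rangle\ll d\langle M\rangle$. That part is fine (modulo the usual care needed for $\langle M,Z\rangle$ to exist, which local boundedness of $X$ handles).

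Your Case~2 argument, however, has a genuine gap in the final step. From the constancy of the continuous local martingales $M$ and $M^{\bF}$ on intervals where $[X]$ is flat, you correctly deduce that the signed measure $d(A-A^{\bF})$ does not charge the \emph{open set} $\{d\langle M\rangle=0\}$, i.e.\ it is carried by $\mathrm{supp}(d\langle M\rangle)$. But ``carried by the support'' is strictly weaker than absolute continuity. Concretely, if $d\langle M\rangle_t=dt$ then $\{d\langle M\rangle=0\}=\emptyset$ and your constraint is vacuous, yet nothing you have written prevents $A-A^{\bF}$ from being, say, a Cantor-type singular function. So the implication ``$d(A-A^{\bF})=0$ on $\{d\langle M\rangle=0\}$ and $dA^{\bF}\ll d\langle M\rangle$, hence $dA\ll d\langle M\rangle$'' fails.

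The deeper issue is that you invoke $\mathcal S(\bG,\bP)\neq\emptyset$ only to guarantee that $X$ is a $(\bG,\bP)$-semimartingale, and then try to transfer the $\bF$-structure condition to $\bG$ via the pathwise identity $\langle M\rangle=[X]=\llangle M^{\bF}\rrangle$. This would amount to proving that the information drift $M^{\bF}-M$ of a continuous $\bF$-local martingale in \emph{any} enlargement where it remains a semimartingale is automatically absolutely continuous with respect to $d[X]$ --- a statement that does not follow from your constancy argument and is not what Hypothesis~\ref{H1} alone provides. The paper does not attempt this route; it cites Coculescu--Jeanblanc~\cite{CocJean19}, where the supermartingale measure $\bQ\in\mathcal S(\bG,\bP)$ is used substantively (not merely for semimartingality) to obtain the structure condition directly in $\bG$ under~(C).
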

\proof
In the case where $\mathcal M(\bG,\bP)\neq\emptyset$ holds, the decomposition in (\ref{Xdecomp}) is due to Ansel and Stricker \cite{AnselStick92}  and Schweizer \cite {Schw92}  for a $\bG$ predictable process $\alpha$. The same holds when  $\mathcal{S}(\bG,\bP)\neq\emptyset$, as proved by Coculescu and Jeanblanc \cite{CocJean19} (note that should $X$ be a discontinuous price process the decomposition above is not automatic). 
%It remains to prove that process $\alpha$ has a null $(\bF,\bP)$ optional projection.  This is a consequence of $X$ being a $\bP$-local martingale in the smaller filtration $\bF$ (see Br\'emaud and Yor \cite{bremaudyor}, Proposition 3). 
\finproof

From Theorem \ref{ThmXdecomp}, we may characterise a framework where the insider can have free lunches with vanishing risk without engaging in short sales:
\begin{Cor}
 Suppose that Hypothesis \ref{H1} and condition (C) hold. Then, if Hypothesis \ref{H2} is not satisfied, it follows that $\mathcal M(\bG,\bP)=\emptyset$ and $\mathcal{S}(\bG,\bP)=\emptyset$.
\end{Cor}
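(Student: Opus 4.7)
The plan is to observe that this corollary is the direct contrapositive of Theorem \ref{ThmXdecomp}, once condition (C) is fixed. First I would note that Theorem \ref{ThmXdecomp} has the logical form: \emph{if} Hypothesis \ref{H1} holds \emph{and} (condition 1 \emph{or} condition 2 holds), \emph{then} Hypothesis \ref{H2} holds. Taking contrapositives, if Hypothesis \ref{H1} holds and Hypothesis \ref{H2} fails, then both condition 1 and condition 2 must fail.

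Next I would substitute the standing assumption (C) into the conclusion. Since condition 2 is the conjunction ``(C) and $\mathcal{S}(\bG,\bP)\neq\emptyset$,'' and (C) is assumed to hold by hypothesis of the corollary, the only way for condition 2 to fail is for $\mathcal{S}(\bG,\bP)=\emptyset$. Meanwhile, the failure of condition 1 is exactly $\mathcal{M}(\bG,\bP)=\emptyset$. Putting these together yields the two conclusions of the corollary simultaneously.

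The only thing I would briefly highlight for clarity is that the inclusion $\mathcal{M}(\bG,\bP)\subset\mathcal{S}(\bG,\bP)$ makes the first conclusion redundant given the second, so there is no hidden step involving separate arguments for the two sets; the single contrapositive of Theorem \ref{ThmXdecomp} delivers both. There is no real obstacle here — the content of the statement is entirely located in Theorem \ref{ThmXdecomp} (in particular in the deep results of Ansel--Stricker, Schweizer, and Coculescu--Jeanblanc cited there), and the corollary is a bookkeeping reformulation tailored to highlight a practical criterion for the simultaneous failure of (NFLVR) and (NFLVRS) in the continuous case.
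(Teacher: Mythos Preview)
Your proposal is correct and matches the paper's approach: the paper states this corollary immediately after Theorem \ref{ThmXdecomp} without any explicit proof, treating it as the evident contrapositive, which is precisely what you have written out.
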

In Coculescu and Jeanblanc \cite {CocJean19}, examples are provided where the  $(\bG,\bP)$ decomposition of  $X$ fails to satisfy  (\ref{Xdecomp}), in the case where $\bG$ is obtained by  progressive enlargements of $\bF$ with a random time. The corresponding arbitrage portfolios are also discussed.  The current paper is intended to further analyse the question of (NFLVRS) within  the frame of Hypothesis \ref{H2}, that is, the representation (\ref{Xdecomp}) for the stock price holds.

Finally, we can  notice that implicit to Hypothesis \ref{H1} is also a $(\bF,\bP)$ decomposition of the price process $X$, as follows: there is a $(\bF,\bP)$ local martingale $m^{\bF,\bP}$ such that
\begin{equation}\label{XdecompF}
X_t=X_0+\int_0^ta_sd\llangle m^{\bF,\bP}\rrangle_s +m^{\bF,\bP}_t.
\end{equation}
where we have (recall our notation) $\llangle m^{\bF,\bP}\rrangle=\;^{o,\bP} [m^{\bF,\bP}]$, that is the sharp bracket of the  $(\bF,\bP)$ local martingale $m^{\bF,\bP}$.  This decomposition follows by the same argument as in the proof of Theorem \ref{ThmXdecomp} (i), but applied to $\bF$ instead of $\bG$.

Given Hypotheses \ref{H1} and \ref{H2}, it follows that  $m^{\bF,\bP}$ is a $(\bG,\bP)$ semimartingale that decomposes:
\[
m^{\bF,\bP}_t:=\left(\int_0^t\alpha_sd\langle M\rangle_s-\int_0^ta_sd\llangle m^{\bF,\bP}\rrangle_s\right) +M_t.
\]  

Further analysis in Section \ref{secH} will again give rise to the representations (\ref{Xdecomp}) and (\ref{XdecompF}) by means of Girsanov transformations. It is important to keep in mind that we assume that no free lunch with vanishing risk holds for common investors (that is in the smaller filtration $\bF$), and we analyse the properties of the price process when some additional information is available for an insider, that is, in the filtration $\bG$. %Of course,  without inside information,  the decomposition of a price process  will not have any sound justification to be such as in (\ref{Xdecomp}).

\section{The minimal supermartingale measure: definition and some sufficient conditions for its existence}\label{Sec3}
In this section we introduce the concept of  minimal supermartingale measure and establish some connections with the minimal martingale measure.  We use the framework from the previous section, in particular the representation of the asset price in Hypothesis \ref{H2} is supposed to be verified.   However, Hypothesis \ref{H1} is not used in establishing the results in this section, so that the results are general and still hold even if it is were not satisfied. We introduce the following $(\bG,\bP)$ local martingales:
 \begin{align}\label{R}
R_t:&= \mathcal E_t\left(-\int_0^\cdot \alpha_ud M_u\right), \quad t\geq 0\\\label{Rplus}
R^{(+)}_t:&= \mathcal E_t\left(-\int_0^\cdot (\alpha_u)^{+}d M_u\right), \quad t\geq 0.
\end{align}
We denote
\[
\frac{d\bQ^{\mathcal M}}{d\bP}\Big|_{\G_t}=R_t,
\] and if $R$ is a strictly positive and uniformly integrable martingale, we have $\bQ^{\mathcal M}\in\mathcal M(\bG,\bP)$. Similarly, we denote
\begin{equation}\label{defQminsup}
\frac{d\bQ^{\mathcal S}}{d\bP}\Big|_{\G_t}=R^{(+)}_t,
\end{equation} and if $R^{(+)}$ is a strictly positive and uniformly integrable martingale we have $\bQ^{\mathcal S}\in\mathcal{S}(\bG,\bP)$ which can be checked using Girsanov's theorem. In general, $\bQ^{\mathcal M}$ and $\bQ^{\mathcal S}$ are signed measures.

We recall below the notion of  minimal martingale measure  which first appeared in Schweizer \cite{Sch88} (see also  F\"ollmer and Schweizer (\cite{FoelSchw10}) for a survey of this very rich topic).

\begin{Definition}
When $R$ is a strictly positive and square integrable $(\bG,\bP)$ martingale, we call $\bQ^{\mathcal M}\in\mathcal M(\bG,\bP)$ the $(\bG,\bP)$ minimal  martingale measure for $X$.
\end{Definition}
Symmetrically, let us introduce the following definition:
\begin{Definition}
When $R^{(+)}$ is a strictly positive and square integrable $(\bG,\bP)$ martingale, we call $\bQ^{\mathcal S}\in\mathcal{S}(\bG,\bP)$, that is defined as in (\ref{defQminsup}),  the $(\bG,\bP)$ minimal supermartingale measure for $X$.

\end{Definition}

In what follows, we will skip the specification of the filtration and of the reference probability $(\bG,\bP)$ and will be simply referring to $\bQ^{\mathcal M}$ resp. $\bQ^{\mathcal S}$ as  the minimal martingale (resp. supermartingale) measure. 

It can be shown that when $X$ has continuous sample paths, the minimal supermartingale measure can be characterised as the unique 
solution of a minimisation problem involving the relative entropy with respect to $\bP$, in a parallel manner to the result holding for the minimal martingale measure; in both cases the idea is to obtain a ``minimal modification'' of $\bP$ that allows to reach the set of supermartingale versus martingale measures. The formal setting is given below.
\begin{Definition}
Given two probability measures $\bP$ and $\bQ$ on $(\Omega,\G)$, the \textit{relative entropy}, $H(\bQ| \bP)$  is defined as
\begin{equation*}
  H(\bQ|\bP)=\begin{cases}
    \int \log \frac{d\bQ}{d\bP}d\bQ & \text{if $\bQ \ll \bP$}.\\
    +\infty, & \text{otherwise}.
  \end{cases}
\end{equation*}
\end{Definition}
The relative entropy is  interpreted as a directed divergence measure between two probabilities; it is nonnegative and $  H(\bQ|\bP)=0$ if and only if $\bQ =\bP.$

\begin{Theorem}\label{Entropy}
We assume that Hypothesis \ref{H2} and (C) hold. Assume also that $\bQ^{\mathcal S}\in \mathcal{S}_T(\bG,\bP)$ and $ H(\bQ^{\mathcal S}|\bP)<\infty$.
Then, the measure $\bQ^{\mathcal S}$ is the unique solution of 
\be\label{minentropy}
\text{Minimize }H(\bQ| \bP)-\frac{1}{2}\E^{\bQ}\left [\int_{0}^{T}(\alpha^{+}_s)^2d\langle X \rangle_s\right],
\ee
over all supermartingale measures $\bQ \in \mathcal{S}_T(\bG,\bP)$, satisfying the condition $\E^{\bQ}\left [\int_{0}^{T}(\alpha_s^{+})^2d\langle X \rangle_s\right]<\infty$.
\end{Theorem}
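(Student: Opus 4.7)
The plan is to rewrite the functional
\[
F(\bQ) := H(\bQ|\bP) - \tfrac{1}{2}\E^{\bQ}\!\left[\int_0^T (\alpha^+_s)^2 d\langle X\rangle_s\right]
\]
as a sum of two manifestly nonnegative terms, one being the relative entropy $H(\bQ|\bQ^{\mathcal S})$; strict positivity of the entropy will then supply both $F(\bQ)\geq 0$ and uniqueness at $\bQ^{\mathcal S}$. Since $R^{(+)}$ is strictly positive, $\bQ^{\mathcal S}\sim\bP$, so $\bQ\sim\bQ^{\mathcal S}$ for every admissible $\bQ$, and the chain rule
\[
H(\bQ|\bP) = H(\bQ|\bQ^{\mathcal S}) + \E^{\bQ}[\log R^{(+)}_T]
\]
is legitimate whenever $H(\bQ|\bP)<\infty$ (the only relevant case, since otherwise $F(\bQ)=+\infty$).

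By condition (C) the local martingale $M$, hence $R^{(+)}$, is continuous, so
\[
\log R^{(+)}_T = -\int_0^T \alpha^+ dM - \tfrac{1}{2}\int_0^T (\alpha^+)^2 d\langle M\rangle,
\]
and using $\langle X\rangle = \langle M\rangle$ the chain rule reduces $F$ to
\[
F(\bQ) = H(\bQ|\bQ^{\mathcal S}) - \E^{\bQ}\!\left[\int_0^T \alpha^+ dM\right] - \E^{\bQ}\!\left[\int_0^T (\alpha^+)^2 d\langle X\rangle\right].
\]
The middle term I would handle via Doob-Meyer under $\bQ$: because $X\geq 0$ is a $\bQ$-supermartingale it is $\bQ$-special, giving $M = \bar M + A$ with $\bar M$ a continuous $\bQ$-local martingale (continuity inherited from $M$) and $A$ predictable of finite variation. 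The supermartingale property of $X = X_0 + M + \int \alpha d\langle X\rangle$ then forces
\[
K := -\Bigl(A + \int_0^\cdot \alpha_s d\langle X\rangle_s\Bigr)
\]
to be predictable, nonnegative and increasing. Quadratic variations being preserved by equivalent changes of measure for continuous semimartingales, $\langle \bar M\rangle = \langle X\rangle$, and the integrability hypothesis makes $\int \alpha^+ d\bar M$ a square-integrable $\bQ$-martingale, hence with vanishing $\bQ$-expectation at $T$. Combining with the elementary identity $\alpha\alpha^+ = (\alpha^+)^2$ yields
\[
-\E^{\bQ}\!\left[\int_0^T \alpha^+ dM\right] = \E^{\bQ}\!\left[\int_0^T \alpha^+ dK\right] + \E^{\bQ}\!\left[\int_0^T (\alpha^+)^2 d\langle X\rangle\right],
\]
and substitution delivers the key identity
\[
F(\bQ) = H(\bQ|\bQ^{\mathcal S}) + \E^{\bQ}\!\left[\int_0^T \alpha^+ dK\right] \geq 0.
\]

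The value $F(\bQ^{\mathcal S})=0$ is then checked directly: a Girsanov computation with density $R^{(+)}$ produces drift $A^{\bQ^{\mathcal S}} = -\int \alpha^+ d\langle X\rangle$ and hence $dK^{\bQ^{\mathcal S}} = (\alpha^+-\alpha)d\langle X\rangle$, a measure carried by $\{\alpha<0\}$ where $\alpha^+ = 0$, so $\int \alpha^+ dK^{\bQ^{\mathcal S}} = 0$; both summands vanish. Strict positivity $H(\bQ|\bQ^{\mathcal S})>0$ for any admissible $\bQ\neq \bQ^{\mathcal S}$ then secures uniqueness.

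The principal technical hurdle lies in the passage from local to true $\bQ$-martingale in $\E^{\bQ}\!\left[\int_0^T \alpha^+ d\bar M\right] = 0$, and in the applicability of the entropy chain rule in expectation. Both are controlled by the $L^2(\bQ)$-bound supplied by the hypothesis $\E^{\bQ}[\int_0^T (\alpha^+)^2 d\langle X\rangle]<\infty$; if a more delicate step is needed, I would work on $[0,T\wedge\tau_n]$ for a localising sequence $\tau_n$ and let $n\to\infty$, exploiting monotone convergence on the nonnegative quantities $H(\bQ|\bQ^{\mathcal S})$ and $\int \alpha^+ dK$ on the right-hand side of the identity.
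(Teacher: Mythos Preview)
Your proof is correct and follows essentially the same route as the paper: the entropy chain rule $H(\bQ|\bP)=H(\bQ|\bQ^{\mathcal S})+\E^{\bQ}[\log R^{(+)}_T]$, the explicit expression for $\log R^{(+)}_T$, and then isolating a nonnegative remainder coming from the supermartingale drift of $X$ under $\bQ$. The only cosmetic difference is that the paper parameterises this drift via the Kunita--Watanabe decomposition of the log-density $L^{\bQ}=\int\beta^{\bQ}dM+N^{\bQ}$ and Girsanov (obtaining the remainder $-\E^{\bQ}[\int_0^T\alpha^+_s(\alpha_s+\beta^{\bQ}_s)\,d\langle X\rangle_s]\geq 0$), whereas you take the Doob--Meyer decomposition of $M$ under $\bQ$ directly and write the same remainder as $\E^{\bQ}[\int_0^T\alpha^+_s\,dK_s]$; since $dK=-(\alpha+\beta^{\bQ})\,d\langle X\rangle$ under (C), the two expressions coincide.
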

\begin{proof} 

Note that the condition $\E^{\bQ}\left [\int_{0}^{T}(\alpha_s^{+})^2d\langle X \rangle_s\right]<\infty$ implies that $\int \alpha^{+}dM$ is a true martingale. 

Let $\bQ \in \mathcal{S}_T(\bG,\bP)$. There exists a $(\bG,\bP)$ local martingale $L^\bQ$ such that $\frac{d\bQ}{d\bP}|_{\G_{t\wedge T}}=\mathcal E_t(L^\bQ)$, for $t\geq 0$. Further, by the Kunita-Watanabe decomposition,
\begin{equation} \label{LKW}
L^{\bQ}_t=\int_{0}^{t}\beta^{\bQ}_sdM_s +N^{\bQ}_t. 
\end{equation}
In the above, $N^{\bQ}$ is a $(\bG,\bP)$ local martingale orthogonal to $M$ and $\beta^{\bQ}$ is a $\bG$ predictable process. 
Under $(\bG,\bQ)$ and for $t\in[0,T]$, $X$ possesses the following decomposition:

\begin{equation*} 
X_t= M^\bQ_t+\int_{0}^{t}(\alpha_s+\beta^{\bQ}_s)d \langle M\rangle_s ,
\end{equation*}
 and  $M^{\bQ}=M-\langle L^{\bQ}, M \rangle 
= M -\int\beta^{\bQ} d\langle M \rangle$ is a $(\bG,\bQ)$ local martingale.

As $(X_{t\wedge T})$ is a $(\bG,\bQ)$ supermartingale,  for any $t\in [0,T]$ it must hold that
\be\label{alphabetaneg}
(\alpha_t+\beta_t^{\bQ})\ind_{d\langle M\rangle_t >0}\leq 0\quad \bP\; a.s.
\ee
We use the notation  $\{d\langle M\rangle_t> 0\}$, for the support of the random measure on $\mathcal B([0,T])$,
associated with the nondecreasing process $\langle M\rangle$.

We have that
\begin{align*}
H(\bQ|\bP) &=H(\bQ|\bQ^{\mathcal S})+\int \log R^{(+)}_T d\bQ\\
 &=H(\bQ|\bQ^{\mathcal S})+ \int \left(-\int_{0}^{T}\alpha^{+}_sdM_s-\frac{1}{2}\int_{0}^{T}(\alpha^{+}_s)^{2}d\langle M\rangle_s \right)d\bQ\\
&=H(\bQ|\bQ^{\mathcal S})+ \int \left(-\int_{0}^{T}\alpha^{+}_sdM^{\bQ}_s -\int_{0}^{T}\alpha^{+}_s\beta^{\bQ}_sd\langle M \rangle_s   -\frac{1}{2}\int_{0}^{T}(\alpha^{+}_s)^{2}d\langle M\rangle_s \right)d\bQ\\
& =H(\bQ|\bQ^{\mathcal S})+\E^{\bQ}\left[\int_{0}^{T}( -\alpha^{+}_s\beta^{\bQ}_s-\frac{1}{2}(\alpha^{+}_s)^{2})d\langle X\rangle_s\right ]. 
\end{align*}
Therefore,  
\begin{align*}
H(\bQ|\bP)-\frac{1}{2}\E^{\bQ}\left [\int_{0}^{T}(\alpha^{+}_s)^2d\langle X \rangle_s\right]=  H(\bQ|\bQ^{\mathcal S}) - \E^{\bQ}\left [\int_{0}^{T}\alpha^{+}_s(\alpha_s+\beta^{\bQ}_s)d\langle X\rangle_s\right ].
\end{align*}
and 
\begin{align*}
 H(\bQ|\bP)-\frac{1}{2}\E^{\bQ}\left [\int_{0}^{T}(\alpha^{+}_s)^2d\langle X \rangle_s\right]\geq \min_\bQ   H(\bQ|\bQ^{\mathcal S}) - \max_\bQ \E^{\bQ}\left [\int_{0}^{T}\alpha^{+}_s(\alpha_s+\beta^{\bQ}_s)d\langle X\rangle_s\right ],
\end{align*}
where the min and max above are taken over the same set as in the problem (\ref{minentropy}); it can be checked that $\bQ^{\mathcal S}$ belongs in this set. 

We have that  $\min_\bQ   H(\bQ|\bQ^{\mathcal S})=0$; the minimal value of $0$ is obtained if and only if $\bQ=\bQ^{\mathcal S}.$
Also, in view of the inequality (\ref{alphabetaneg}), $\E^{\bQ}\left [\int_{0}^{T}\alpha^{+}_s(\alpha_s+\beta^{\bQ}_s)d\langle X\rangle_s\right ]\leq 0$ and we can check that the maximum value of this expected value is 0, obtained when taking $\beta^{\bQ}=-\alpha^+$, that is,  $G^{\bQ}=R^{(+)}$, or $\bQ=\bQ^{\mathcal S}.$ This proves that (\ref{minentropy})  equals  $0$ and this value is attained if and only if $\bQ=\bQ^{\mathcal S}.$
\end{proof} 

Let us now emphasise some first connections between $\bQ^{\mathcal M}$ and $\bQ^{\mathcal S}$:

\begin{Theorem}\label{TheoremGeneral}We assume that the representation in Hypothesis \ref{H2} holds. 
\begin{itemize}
\item[(a)]If the minimal martingale measure $\bQ^{\mathcal M}$ exists then,  $\bQ^{\mathcal S}\in \mathcal{S}(\bG,\bP)$. 
 \item[(b)] Suppose that the process $(\ind_{\alpha_t> 0})_{t\geq 0}$ admits a right-continuous modification having bounded total variation. 
Under this condition,  if $\bQ^{\mathcal M}\in \mathcal M(\bG,\bP)$ then $\bQ^{\mathcal S}\in \mathcal{S}(\bG,\bP)$.
 \end{itemize}
\end{Theorem}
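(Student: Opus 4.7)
My approach rests on the factorization
\[ R \;=\; R^{(+)}\cdot Z, \qquad Z \;:=\; \mathcal{E}\Bigl(\int \alpha^- dM\Bigr), \]
which follows from Yoeurp's formula once one observes that $\alpha^+\alpha^- \equiv 0$ forces $\bigl[\int \alpha^+ dM,\,\int \alpha^- dM\bigr] = 0$. Since $R>0$ is equivalent to $\alpha_t \Delta M_t < 1$ at every jump, the same inequality holds for $\alpha^+_t \Delta M_t$, so $R^{(+)}$ is automatically a strictly positive local martingale — hence a non-negative supermartingale with $\mathbb{E}[R^{(+)}_\infty] \le 1$. In both parts the remaining task is to upgrade this inequality to equality, i.e.\ to show $R^{(+)}$ is a true uniformly integrable martingale.

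For part (b), the bounded-variation hypothesis on the right-continuous modification $H$ of $\ind_{\alpha > 0}$ forces $H \in \{0,1\}$ to have only finitely many switches on every compact interval. Enumerate the disjoint intervals $(s_j, t_j]$ on which $H = 1$, and set $N := \int \alpha dM$. Then $\int \alpha^+ dM = \int H_- dN = \sum_j (N^{t_j} - N^{s_j})$, whose summands are pairwise strongly orthogonal (disjoint time supports), so iterating Yoeurp's identity together with the telescoping relation $\mathcal{E}_t(X)/\mathcal{E}_s(X) = \mathcal{E}_t(X - X^s)$ gives
\[ R^{(+)}_t \;=\; \prod_j \mathcal{E}_t\!\bigl(-(N^{t_j}-N^{s_j})\bigr) \;=\; \prod_j \frac{R_{t_j \wedge t}}{R_{s_j \wedge t}}. \]
Because $R$ is a uniformly integrable martingale, optional sampling yields $\mathbb{E}[R_{t_j}/R_{s_j}\,|\,\G_{s_j}] = 1$; nesting these conditional expectations along the ordered switch times $s_1 \le t_1 \le \ldots$ then gives $\mathbb{E}[R^{(+)}_\infty] = 1$, which is the desired UI martingale property.

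For part (a), no such structural decomposition is available, and the extra integrability hypothesis must do the work instead. The natural route is to pass to $\bQ^{\mathcal M}$: by Girsanov, $\tilde M := M + \int \alpha d\langle M\rangle$ is a $\bQ^{\mathcal M}$-local martingale, and a direct It\^o computation identifies $Z^{-1}$ as the $\bQ^{\mathcal M}$-Dol\'eans exponential $\mathcal{E}(-\int \alpha^- d\tilde M)$; by Bayes' rule, $R^{(+)}$ is a $\bP$-martingale if and only if $Z^{-1}$ is a $\bQ^{\mathcal M}$-martingale. The $L^2(\bP)$-bound on $R$ controls $\bQ^{\mathcal M}$-moments of $R_\infty$ (one has $\mathbb{E}^{\bQ^{\mathcal M}}[R_\infty] = \mathbb{E}[R_\infty^2] < \infty$), and combined with $(\alpha^-)^2 \le \alpha^2$ this yields a Novikov- or Kazamaki-type bound that promotes $Z^{-1}$ to a UI $\bQ^{\mathcal M}$-martingale. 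The main obstacle is precisely this last step — squeezing a usable exponential moment for $\int(\alpha^-)^2 d\langle M\rangle$ out of square integrability of $R$ after the change of measure, and then validating the Novikov criterion — a delicate analytic point that the explicit product factorization in part (b) bypasses entirely.
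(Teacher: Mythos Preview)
Your argument for part (b) is essentially the paper's: the paper packages it as a separate lemma (Lemma~A.2 in the appendix), enumerating the jump times $T^k$ of the $\{0,1\}$-valued process $H$, writing $R^{(+)}=\prod_{k\ge 1}M^{(2k)}$ with $M^{(k)}=\mathcal E(L_{T^k\wedge\cdot}-L_{T^{k-1}\wedge\cdot})$, and nesting conditional expectations along the $T^k$ to get $\E[R^{(+)}_{T^{2k}}]=1$ for every $k$. Bounded total variation means the counting process is bounded, so some $T^{2K}=\infty$ a.s.\ and UI follows. Your ratio formula $\prod_j R_{t_j\wedge t}/R_{s_j\wedge t}$ and the optional-sampling nesting are the same mechanism.

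For part (a), however, you have a genuine gap --- and you say so yourself. Square integrability of $R$ under $\bP$ gives only $\E^{\bQ^{\mathcal M}}[R_\infty]<\infty$; it does \emph{not} produce an exponential moment of $\int(\alpha^-)^2\,d\langle M\rangle$ under either measure, so neither Novikov nor Kazamaki is available without an additional idea. The paper's route is entirely different and bypasses exponential-moment criteria. It proves a general factorization result (Proposition~A.1): if $U,Z$ are strictly positive local martingales with $\Delta U\,\Delta Z\equiv 0$ and the product $UZ$ is a \emph{square integrable} martingale, then both $U$ and $Z$ are UI. The proof sets $\theta^n:=\inf\{t:U_t\le 1/n\}$ and first shows $Z_{\cdot\wedge\theta^n}$ is square integrable via the elementary bound $Z^n_t\le n\max(M^n_{t-},M^n_t)$ (this is where the $L^2$ hypothesis enters). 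The non-obvious step is to show $\theta^n$ also localizes $U$: one builds the auxiliary filtration $\mathcal H$ generated by the $L^2$-martingales strongly orthogonal to $Z$, observes (via Br\'emaud--Yor) that the (H) hypothesis holds from $\mathcal H$ to the ambient filtration, that $U$ is $\mathcal H$-adapted and hence $\theta^n$ is an $\mathcal H$-stopping time, and that $\E[Z_{\theta^n}\mid\mathcal H_{\theta^n}]=1$. This yields $\E[U_{\theta^n}]=\E[U_{\theta^n}Z_{\theta^n}]=1$, and since $U_{\theta^n}\le 1/n$ on $\{\theta^n<\infty\}$ a limit argument gives $\E[U_\infty]=1$. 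Applied with $U=R^{(+)}$, $Z=R^{(-)}$, this closes part (a) cleanly. Your Girsanov/Bayes reduction to showing $Z^{-1}$ is a $\bQ^{\mathcal M}$-martingale is correct as a reformulation, but without the localization/filtration argument above (or some substitute), it remains a restatement of the problem rather than a proof.
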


\brem
Before we prove this result, let us illustrate the condition appearing in (b). \\ We consider $W$ is a Brownian motion with respect to $(\bG,\bP)$, and let $\alpha:=|W|$. We obtain $\ind_{|W_t|>0}=H_t-\ind_{W_t=0}$ where $H_t:=\ind_{|W_t|\geq 0}=1$. By Proposition 3.12 page 108 in \cite{RevuzYor},  the set $\{t: W(t)=0\}$ is a.s. closed, without any isolated points and has zero Lebesgue measure. Hence, $H$ is a modification of the process $\ind_{|W_t|>0}$. As $H$ is continuous, with null total variation, we conclude that for this example, the condition stated in Theorem \ref{TheoremGeneral} (b) is fulfilled.  \\
Alternatively, let us consider there is a  $(\bG,\bP)$ counting process $N$ with c\`adl\`ag sample path and with finite explosion time. Let $\alpha:=(-1)^{N}$. Then, the process $\alpha$ is c\`adl\`ag and has infinite total variation. In this case, the  condition stated in Theorem \ref{TheoremGeneral} (b) is not fulfilled.
\erem

\proof (of Theorem \ref{TheoremGeneral})
\begin{itemize}

\item[(a)]We need to prove that if $R$ is a strictly positive and square integrable $(\bG,\bP)$ martingale then $R^{(+)}$ is a strictly positive and uniformly integrable $(\bG,\bP)$ martingale.  Let us introduce the following $(\bG,\bP)$ local martingale, which is orthogonal to  $R^{(+)}$:
\[
R^{(-)}_{t}=\mathcal E_t\left(\int_0^\cdot (\alpha_u)^-d M_u\right),\quad t\geq 0,
\]
so that:
\[
R_t:=R^{(+)}_{t}R^{(-)}_{t}.
\]
The processes $R^{(+)}$ and $R^{(-)}$ are strictly positive local martingales: they satisfy $R_0^{(+)}=R_0^{(-)}=1$ and can become negative or null only after a jump, i.e., one at a time at most (as they do not have common jumps). In other words, the product $R$ is strictly positive if and only if $R^{(+)}$ and $R^{(-)}$ are strictly positive. Then, one can show that  $R^{(+)}$ is a uniformly integrable martingale, as a direct application of Proposition \ref{Y=UZ} in Appendix \ref{appendix}.

\item[(b)] 
  If $ \bQ^{\mathcal M}\in \mathcal M(\bG,\bP)$, then  $R=\mathcal E(L)$ is a strictly positive, uniformly integrable $(\bG,\bP)$ martingale, where $L=-\int \alpha dM$. We denote by $H=(H_t)_{t\geq 0}$ a right-continuous modification of the process  $(\ind_{\alpha_t>0})_{t\geq 0}$, assumed of bounded variation. We can write $R^{(+)}_t=\mathcal E_t(\int HdL)$ and $R^{(-)}_t=\mathcal E_t(\int (1-H)dL)$.   The result then follows as an application of the Lemma \ref{M=UZ(2)} in Appendix \ref{appendix}. 
\end{itemize}
\finproof

Another sufficient condition for $\bQ^{\mathcal S}\in \mathcal{S}(\bG,\bP)$ is the following:

\begin{Lemma}We assume that the representation in Hypothesis \ref{H2} holds. 

Suppose that it exists $ \bQ\in  \mathcal{S}_T(\bG,\bP)$ such that  $d\bQ/d\bP|_{\G_t}=\mathcal E_t(L^\bQ)$  is satisfying $\E^{\bP}[e^{\frac{1}{2}\langle L^\bQ\rangle_T }]<\infty$. Then, $\bQ^{\mathcal S}\in\mathcal{S}_T(\bG,\bP)$.
\end{Lemma}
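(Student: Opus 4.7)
The plan is to use the Kunita--Watanabe decomposition to control the integrand that drives $R^{(+)}$ by the integrand that drives $L^{\bQ}$, and then to invoke Novikov's criterion. Concretely, I would first run the same opening as in the proof of Theorem~\ref{Entropy}: decompose
$$L^{\bQ}_t = \int_0^t \beta^{\bQ}_s\, dM_s + N^{\bQ}_t,$$
with $N^{\bQ}$ a $(\bG,\bP)$ local martingale orthogonal to $M$. The fact that $X_{\cdot\wedge T}$ is a $(\bG,\bQ)$ supermartingale, applied through Girsanov exactly as in (\ref{alphabetaneg}), forces $\beta^{\bQ} \leq -\alpha$ on $\{d\langle M\rangle > 0\}$.

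Next I would establish the pointwise inequality $(\alpha^{+})^{2} \leq (\beta^{\bQ})^{2}$ on $\{d\langle M\rangle > 0\}$: on $\{\alpha > 0\}$ we have $\beta^{\bQ} \leq -\alpha < 0$, whence $|\beta^{\bQ}| \geq \alpha = \alpha^{+}$; on $\{\alpha \leq 0\}$ the inequality is trivial since $\alpha^{+} = 0$. Integrating against $d\langle M\rangle$ and using orthogonality of $N^{\bQ}$ and $M$ gives
$$\int_0^T (\alpha^{+}_s)^2\, d\langle M\rangle_s \;\leq\; \int_0^T (\beta^{\bQ}_s)^2\, d\langle M\rangle_s \;\leq\; \langle L^{\bQ}\rangle_T.$$
Combined with the hypothesis $\E^{\bP}[e^{\frac{1}{2}\langle L^{\bQ}\rangle_T}] < \infty$, this yields $\E^{\bP}\bigl[\exp\bigl(\tfrac{1}{2}\int_0^T (\alpha^{+}_s)^2\, d\langle M\rangle_s\bigr)\bigr] < \infty$, which is precisely Novikov's condition for $-\int_0^{\cdot} \alpha^{+}_s\, dM_s$. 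Therefore $R^{(+)}$ is a uniformly integrable $(\bG,\bP)$ martingale on $[0,T]$.

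For strict positivity of $R^{(+)}$, I would argue from that of $\mathcal{E}(L^{\bQ})$: at any jump time of $M$, strong orthogonality of $N^{\bQ}$ and $M$ forces $\Delta N^{\bQ} = 0$, so $1 + \beta^{\bQ}\Delta M > 0$; combining with $\beta^{\bQ} \leq -\alpha$ and splitting on the sign of $\Delta M$ gives $1 - \alpha^{+}\Delta M > 0$. Finally, a second application of Girsanov gives
$$X = X_0 + \wt M - \int \alpha^{-}\, d\langle M\rangle \qquad \text{under } \bQ^{\mathcal S},$$
with $\wt M := M + \int \alpha^{+}\, d\langle M\rangle$ a $(\bG,\bQ^{\mathcal S})$ local martingale; since $X \geq 0$ (by Hypothesis~\ref{H2}) and the bounded variation piece is nonincreasing, $X_{\cdot\wedge T}$ is a true $(\bG,\bQ^{\mathcal S})$ supermartingale, so $\bQ^{\mathcal S} \in \mathcal{S}_T(\bG,\bP)$.

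The genuinely new ingredient is Step~2, the pointwise domination $(\alpha^{+})^{2}\leq(\beta^{\bQ})^{2}$ extracted from the supermartingale constraint $\beta^{\bQ}\leq -\alpha$; everything else is a routine Novikov/Girsanov sandwich. The minor technical obstacle is the strict positivity of $R^{(+)}$ in the discontinuous setting, but it reduces to the no-common-jumps consequence of Kunita--Watanabe orthogonality and a sign-case analysis as sketched above.
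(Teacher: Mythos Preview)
Your core argument---deducing $(\alpha^{+})^{2}\le(\beta^{\bQ})^{2}$ on $\{d\langle M\rangle>0\}$ from the supermartingale constraint $\alpha+\beta^{\bQ}\le 0$ and then applying Novikov---is exactly the paper's proof, which stops at that point. Your additional steps (strict positivity of $R^{(+)}$ and the final Girsanov verification that $X$ is a $\bQ^{\mathcal S}$-supermartingale) go beyond what the paper writes; one caveat is that Kunita--Watanabe orthogonality yields only $\langle N^{\bQ},M\rangle=0$, not $[N^{\bQ},M]=0$, so the claim ``$\Delta N^{\bQ}=0$ at jump times of $M$'' is not justified in general and the positivity argument in the discontinuous case would need a different route---though the paper's own proof is silent on strict positivity as well.
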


\proof 
We  use the same notation as in the proof of Theorem \ref{Entropy}.; in particular the process $L^\bQ$ is a $(\bG,\bP)$ local martingale with the Kunita Watanabe decomposition given in (\ref{LKW}).
  In order for $X$ to be a $(\bG,\bQ)$-supermartingale, necessarily  the process:
\[
 \int_0^t \left(\alpha_u+\beta^\bQ_u\right)d \langle M\rangle _u,\quad t\in[0,T]
\]
is nonincreasing, in particular we need:
\[
0\leq \alpha \ind_{\{\alpha>0\}}\leq - \beta^\bQ \ind_{\{\alpha>0\}}\quad \bP\;a.s.
\](notice that $\{(\omega,t)|\alpha_t(\omega)>0\}$ is supposed to be included in the support of $d \langle M\rangle$ by Hypothesis \ref{H2}). 
By assumption, $\mathcal E (L^\bQ)$ satisfies the Novikov condition. Hence, in view of the above inequality, $R^{(+)}$ is a uniformly integrable martingale satisfying the Novikov condition.
\finproof

\section{The (H) hypothesis as a no arbitrage condition for insiders in presence of short sales restrictions}\label{secH}

In this section, assuming $\bF$-(PRP), we further analyse connections between the existence of $\bQ^{\mathcal M}$ and $\bQ^{\mathcal S}$, in particular under the assumption that the (H) hypothesis under some probability measure on $(\Omega, \G)$ holds true. Given two filtrations, $\bF\subset \bG$, the (H) hypothesis, also called the immersion property, is the property that all $\bF$ martingales are also $\bG$ martingales; the immersion property holds in relation to a reference probability measure. In the financial literature, the (H) hypothesis under an equivalent martingale measure for the asset prices is known as a classical no arbitrage condition whenever asset prices are $\bF$ adapted, but agents may employ $\bG$-adapted strategies. The main result is due to Blanchet-Scalliet and Jeanblanc \cite{bsjeanblanc} (see Theorem \ref{HNA} below),  further results involving the (H) hypothesis and no arbitrage can be found in  \cite{CocJeanNik12}. It is to be noted that both references  \cite{bsjeanblanc} and  \cite{CocJeanNik12} are focused on the particular case where the filtration $\bG$ is obtained from $\bF$  via a progressive enlargement with a random time, but some of their results are general (as we shall emphasize below).  These results are investigating the link between (NFLVR) in $\bG$ and (H), more precisely, when  the $\bG$ informed agent is able to sell short.  We shall see that  the immersion property is still a key condition for understanding (NFLVRS) in $\bG$. 

For the reader's convenience, let us first give a definition and recall some results that are needed later on.
\begin{Definition}
We say that the immersion property, or (H) hypothesis, holds between $\bF$ and $\bG$ under the probability measure $\bQ$ if  all $(\bF,\bQ)$ local martingales are $(\bG,\bQ)$ local martingales. When this property holds, we write $\bF\overset{\bQ}{\hookrightarrow}\bG$. When this property does not hold, we write  $\bF\overset{\bQ}{\nhookrightarrow}\bG$.
\end{Definition}

More about the immersion of filtrations can be found in \cite{bremaudyor}, \cite{DelMeyer1} or \cite{aj}. See also \cite{Kusuoka}, \cite{CocJeanNik12} where changes of the probability measure are considered together with the immersion property.  

In general, the immersion property is not preserved by changes of the probability measure.  Let us introduce the following notation for the set of probability measures that have the immersion property and are equivalent to $\bP$:
\begin{equation*}
 \mathcal I(\bF,\bG,\bP):=\left \{ \widetilde \bQ \text{ probability on }(\Omega,\G)\; |\;
  \bP\sim \widetilde \bQ,\;\bF \overset{\widetilde \bQ}{\hookrightarrow}\bG
 \right  \}.
\end{equation*}

We now state the result of Blanchet-Scalliet and Jeanblanc that was mentioned above. Even though in \cite{bsjeanblanc}  the filtration $\bG$ is defined as a progressive enlargement of $\bF$ with a random time, the proof of their Proposition 1 -  that corresponds to Theorem \ref{HNA} below - holds actually in the generality of our definition for $\bG$, as one can easily verify; see also Corollary 4.6 in \cite{CocJeanNik12} for an alternative proof of this result.

\begin{Theorem}\label{HNA} We suppose that $\bF$-(PRP) holds and $\mathcal M(\bG,\bP)\neq \emptyset$. Then, $\mathcal M(\bG,\bP)\subset  \mathcal I(\bF,\bG,\bP)$.

\end{Theorem}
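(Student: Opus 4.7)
The plan is to fix $\bQ\in\mathcal M(\bG,\bP)$ and exhibit a specific $(\bF,\bQ)$-local martingale that (i) enjoys the $(\bF,\bQ)$-predictable representation property and (ii) is visibly a $(\bG,\bQ)$-local martingale. Combined with stability of stochastic integrals, this forces every $(\bF,\bQ)$-local martingale to be a $(\bG,\bQ)$-local martingale, which is the definition of $\bF\overset{\bQ}{\hookrightarrow}\bG$. The natural candidate is going to turn out to be $X-X_{0}$ itself.

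The first step is to descend to $\bF$. Since $X$ is $\bF$-adapted and $\bQ$ makes $X$ a $(\bG,\bQ)$-local martingale, iterated conditioning shows that $X$ is also a $(\bF,\bQ|_{\bF})$-local martingale; moreover $\bQ|_{\bF}\sim\bP|_{\bF}$ on $\F_{\infty}$. By $\bF$-(PRP), the strictly positive $(\bF,\bP)$-martingale $Z_{t}=d\bQ|_{\bF}/d\bP|_{\bF}\big|_{\F_{t}}$ writes as $Z=\mathcal E(H\cdot m^{\bF,\bP})$ for some $\bF$-predictable $H$, and Girsanov gives that
\[
\tilde m \;:=\; m^{\bF,\bP}-\int H\,d\llangle m^{\bF,\bP}\rrangle
\]
is a $(\bF,\bQ)$-local martingale. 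I would then invoke the classical fact that (PRP) is preserved under equivalent change of measure, so that $\tilde m$ enjoys $(\bF,\bQ)$-(PRP).

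The identification step is the key computation. From the $(\bF,\bP)$-decomposition $X-X_{0}=m^{\bF,\bP}+\int a\,d\llangle m^{\bF,\bP}\rrangle$ recalled in (2.5), substituting $m^{\bF,\bP}=\tilde m+\int H\,d\llangle m^{\bF,\bP}\rrangle$ yields
\[
X-X_{0}\;=\;\tilde m\;+\;\int (a+H)\,d\llangle m^{\bF,\bP}\rrangle .
\]
Because $X$ is an $(\bF,\bQ)$-local martingale and $\tilde m$ is too, uniqueness of the canonical decomposition of a special semimartingale forces $(a+H)\ind_{d\llangle m^{\bF,\bP}\rrangle>0}\equiv 0$, hence $X-X_{0}=\tilde m$. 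Since $X$ is a $(\bG,\bQ)$-local martingale by the choice of $\bQ$, so is $\tilde m$.

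Finally, take any $(\bF,\bQ)$-local martingale $N$. By $(\bF,\bQ)$-(PRP) of $\tilde m$, $N=N_{0}+K\cdot\tilde m$ for a suitable $\bF$-predictable $K$. As $\bF$-predictable processes are $\bG$-predictable and $\tilde m$ is a $(\bG,\bQ)$-local martingale, $K\cdot\tilde m$ is a $(\bG,\bQ)$-local martingale, hence so is $N$. This proves $\bF\overset{\bQ}{\hookrightarrow}\bG$. The main subtlety is the preservation of (PRP) under the equivalent change $\bP\to\bQ|_{\bF}$, which is standard but needs to be cited carefully; the identification $\tilde m=X-X_{0}$, while looking like a miracle, is really just the coincidence that the same process $m^{\bF,\bP}$ serves both as the $(\bF,\bP)$-martingale part of $X$ and as the PRP-driver, so the drift correction from Girsanov must cancel the drift of $X$.
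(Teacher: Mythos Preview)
The paper does not give its own proof of this theorem: it simply cites Proposition~1 of Blanchet-Scalliet and Jeanblanc \cite{bsjeanblanc} (noting that their argument works for a general $\bG$, not only progressive enlargements), and points to Corollary~4.6 of \cite{CocJeanNik12} for an alternative proof. Your argument is correct and is essentially the Blanchet-Scalliet--Jeanblanc mechanism: under any $\bQ\in\mathcal M(\bG,\bP)$ one identifies the $(\bF,\bQ)$-PRP driver with $X-X_{0}$ itself, which is already a $(\bG,\bQ)$-local martingale, so every $(\bF,\bQ)$-local martingale, being a stochastic integral against $X-X_0$, is a $(\bG,\bQ)$-local martingale.

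Two minor points worth tightening. First, the sentence ``iterated conditioning shows that $X$ is also an $(\bF,\bQ)$-local martingale'' hides the localising-sequence issue: a $\bG$-reducing sequence need not consist of $\bF$-stopping times. Here the standing Hypothesis~\ref{H1} makes $X$ locally bounded, so one can use the $\bF$-stopping times $\sigma_n=\inf\{t:|X_t|\ge n\}$; each $X^{\sigma_n}$ is then a bounded $(\bG,\bQ)$-local martingale, hence a true $(\bG,\bQ)$-martingale, and the tower property gives the $(\bF,\bQ)$-martingale property. Second, for the stability of PRP under the equivalent change $\bP\to\bQ|_{\F_\infty}$ you can cite Theorem~13.12 in \cite{HeWangYan92}, exactly as the paper does in Remark~\ref{remPstar}.
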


We want to understand if a potential of arbitrage is generated by the inside information, and the possible connections with the immersion property. To fully isolate this case of interest, no arbitrage for the common investors will be systematically assumed from now, in the form of  Hypothesis \ref{H1}. This way, any possible $\bG$ arbitrage is necessarily arising from the exploitation of the inside information and is not accessible with $\bF$ adapted strategies.  With this extra requirement, we can be establish some stronger result than Theorem \ref{HNA}, in the form of  Theorem \ref{ThmGenH} below. 

But before, let us also recall some well known financial implication of $\bF$-(PRP), namely the market is complete for the common investors. 
\begin{rem}\label{remPstar}
Suppose that Hypothesis \ref{H1} and $\bF$-(PRP) hold. Then, there exists a unique probability $\bP^*$ on $(\Omega,\F_\infty)$ such that:
\begin{itemize}
\item[(i)] $\bP^*\sim\bP$ on $\F_\infty$, and 
\item[(ii)]$X$ is a $(\bF,\bP^*)$ local martingale. 
\end{itemize}
This result is known under the name of  second theorem theorem of asset pricing (see Harrison and Pliska \cite{HarrPlisk83}).
Furthermore,  $X$ has the  $(\bF,\bP^*)$ predictable representation property in the filtration $\bF$. Indeed,  the predictable representation property is stable under equivalent changes of the probability measure (Theorem 13.12 in \cite{HeWangYan92}). \end{rem}
One question is whether there are extensions of $\bP^*$ to the larger sigma-field $\G$ that can serve as equivalent martingale measures for the insider. 
\begin{Theorem}\label{ThmGenH}
We suppose that Hypothesis \ref{H1} and $\bF$-(PRP) hold. Then, the following are equivalent:
\begin{itemize}
\item[(i)] $\mathcal M(\bG,\bP)\neq \emptyset$ 
\item[(ii)] $\mathcal I(\bF,\bG,\bP)\neq \emptyset$
% there exists  $\widetilde \bQ\sim \bP$ such that  $\bF\overset{\widetilde  \bQ}{\hookrightarrow}\bG$. 
\item[(iii)]  $ \mathcal I(\bF,\bG,\bP)\cap \mathcal M(\bF,\bP)= \mathcal I(\bF,\bG,\bP)\cap \mathcal M(\bG,\bP)\neq \emptyset$. 
\item[(iv)] $ \mathcal I(\bF,\bG,\bP)\cap  \mathcal{S}(\bG,\bP)\neq \emptyset$. 
\end{itemize}
\end{Theorem}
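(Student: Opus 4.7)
The plan is to run a short cycle of implications: $\mathrm{(iii)}\Rightarrow\mathrm{(i)}$, $\mathrm{(iii)}\Rightarrow\mathrm{(iv)}$ and $\mathrm{(iv)}\Rightarrow\mathrm{(ii)}$ are immediate from set inclusions ($\mathcal M\subset\mathcal S$, and one just drops the intersection with $\mathcal I$). For $\mathrm{(i)}\Rightarrow\mathrm{(iii)}$ I would invoke Theorem~\ref{HNA}: any $\bQ\in\mathcal M(\bG,\bP)$ automatically lies in $\mathcal I(\bF,\bG,\bP)$. The equality of the two intersections in $\mathrm{(iii)}$ needs a one-line argument: for $\bQ\in\mathcal I$, an $\bF$-adapted $(\bG,\bQ)$-local martingale is an $(\bF,\bQ)$-local martingale by the tower property $\E^\bQ[X_s|\F_t]=\E^\bQ[\E^\bQ[X_s|\G_t]|\F_t]=X_t$ (since $X_t$ is $\F_t$-measurable), while the converse is the very definition of immersion.

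The substance of the statement lies in $\mathrm{(ii)}\Rightarrow\mathrm{(iii)}$. Take any $\widetilde\bQ\in\mathcal I(\bF,\bG,\bP)$ and let $\bP^*$ be the unique probability on $\F_\infty$ given by Remark~\ref{remPstar}. Since $\widetilde\bQ|_{\F_\infty}\sim\bP|_{\F_\infty}\sim\bP^*$, the Radon--Nikodym density
\[
Z:=\frac{d\bP^*}{d\widetilde\bQ|_{\F_\infty}}
\]
is strictly positive and $\F_\infty$-measurable. I then define $\bQ^*$ on $(\Omega,\G_\infty)$ by $d\bQ^*:=Z\,d\widetilde\bQ$. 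By construction $\bQ^*\sim\widetilde\bQ\sim\bP$ on $\G_\infty$ and $\bQ^*|_{\F_\infty}=\bP^*$, so $X$ is already an $(\bF,\bQ^*|_{\F_\infty})$-local martingale.

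The remaining and central task is to verify that $\bF\overset{\bQ^*}{\hookrightarrow}\bG$; once that is in hand, immersion immediately promotes $X$ from an $(\bF,\bQ^*)$- to a $(\bG,\bQ^*)$-local martingale and hence places $\bQ^*$ in $\mathcal I\cap\mathcal M(\bF,\bP)\cap\mathcal M(\bG,\bP)$, giving $\mathrm{(iii)}$. Two ingredients will do it. First, the density process $Z_t=\E^{\widetilde\bQ}[Z|\G_t]$ coincides with $\E^{\widetilde\bQ}[Z|\F_t]$, because $\bF\overset{\widetilde\bQ}{\hookrightarrow}\bG$ is equivalent to the conditional independence of $\G_t$ and $\F_\infty$ given $\F_t$ under $\widetilde\bQ$, and $Z$ is $\F_\infty$-measurable; hence $Z_t$ is $\F_t$-measurable and the $\bF$- and $\bG$-density processes coincide. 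Second, a Girsanov sandwich: for any $(\bF,\bQ^*)$-local martingale $N$, Girsanov in $\bF$ gives that $ZN$ is an $(\bF,\widetilde\bQ)$-local martingale; by $\bF\overset{\widetilde\bQ}{\hookrightarrow}\bG$ it is a $(\bG,\widetilde\bQ)$-local martingale; and inverse Girsanov in $\bG$ (valid because $Z$ is the $\bG$-density process) yields that $N=(ZN)/Z$ is a $(\bG,\bQ^*)$-local martingale. This is the required immersion.

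The main obstacle I expect is precisely this stability of immersion under the change of measure from $\widetilde\bQ$ to $\bQ^*$. The enabling fact is that the density $Z$ can be chosen $\F_\infty$-measurable, and this is exactly where the hypothesis $\bF$-(PRP) (through the uniqueness of $\bP^*$ in Remark~\ref{remPstar}) plays its decisive role: it is what forces the correction of $\widetilde\bQ$ on $\F_\infty$ into an $\bF_\infty$-measurable density, so that immersion can be transported.
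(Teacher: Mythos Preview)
Your proof is correct and follows essentially the same route as the paper. The paper proves $\mathrm{(ii)}\Rightarrow\mathrm{(iii)}$ by citing Proposition~4.4 of \cite{CocJeanNik12} (applied with $\bP^*$ as reference measure), whereas you reproduce its content directly: the Girsanov sandwich you give is precisely the argument that immersion is preserved under a change of measure with $\F_\infty$-measurable density (stated elsewhere in the paper as Lemma~4.3 of \cite{CocJeanNik12}). One minor technical point: in your ``one-line'' verification of the set equality in $\mathrm{(iii)}$, the tower identity $\E^\bQ[X_s\mid\F_t]=X_t$ is written for true martingales; passing to local martingales requires a word about localization, which is harmless here since $X$ is locally bounded by $\bF$-stopping times.
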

\proof
\begin{itemize}
\item[-] The implication (i) $\Rightarrow$ (ii) follows from Theorem \ref{HNA}. 

\item[-] The  implications (ii) $\Rightarrow$ (iii) follows from Proposition 4.4 in  \cite{CocJeanNik12}. We emphasize that this result holds in the generality of our setting for the filtrations $\bF$ and $\bG$. We recall this result, for the reader's convenience:  \\
\textit{Given a reference probability measure $\bP$, if  $\mathcal I(\bF,\bG,\bP)\neq \emptyset$, then there is  $ \bQ\in\mathcal I(\bF,\bG,\bP)$  such that every $(\bF,\bP)$ martingale is a $(\bG, \bQ)$ martingale, that is,  $\bQ=\bP$ on $\F_\infty$. }\\
This result means that if (ii) holds, then there exist equivalent probability measures that  keep the $\bF$ martingales unchanged and at the same time  ensure the immersion property holds. We now take $\bP^*$ as a reference measure, where $\bP^*\in\mathcal M(\bF,\bP)$, that is, $X$ is a local martingale in the smaller filtration $\bF$ under $\bP^*$. Because $\bP\sim \bP^*$, it follows that $\mathcal I(\bF,\bG,\bP^*)=\mathcal I(\bF,\bG,\bP)$. Thence (ii) is equivalent to: $\mathcal I(\bF,\bG,\bP^*)\neq \emptyset$.
Then, by Proposition 4.4 in  \cite{CocJeanNik12} stated above, there is $ \bQ\in\mathcal I(\bF,\bG,\bP^*)$  such that $\bQ=\bP^*$ on $\F_\infty$, in particular  $X$ is a local martingale in both filtrations under $\bQ$. Hence (iii)   holds.  

\item[-] The implication (iii) $\Rightarrow$ (i) is obvious.

\item[-]The equivalences with the last statement are also obvious: (iv) implies (ii) and also (iii) implies (iv), as if $\bQ$ is a local martingale measure in filtration $\bG$,  it is also a supermartingale measure in filtration $\bG$.
\end{itemize}
\finproof

A few comments on the results in Theorem \ref{ThmGenH} are as follows. For the insider, as soon as there exists a probability measure equivalent to $\bP$ under which the (H) hypothesis holds, there are no arbitrages to undertake,  without or even involving short positions: $\emptyset\neq \mathcal M(\bG,\bP)\subset  \mathcal{S}(\bG,\bP)$. But, on the other hand, if there is not such an equivalent probability measure, then there are automatically arbitrages, possibly involving short-selling, as $\mathcal M(\bG,\bP)= \emptyset$.

 An interesting case is the one where there are arbitrage opportunities for the insider, but they all require short-selling by the insider.  This  corresponds to the case $\mathcal M(\bG,\bP)= \emptyset$ and $\mathcal{S}(\bG,\bP)\neq  \emptyset$.   Theorem \ref{ThmGenH}  tells us that such a situation corresponds to the case where no element of $\mathcal{S}(\bG,\bP)$ can ensure for the (H) hypothesis to hold, when taken as a reference measure. Indeed, another formulation for the equivalence between (i) and (iv) is:
 \begin{Cor} We suppose that Hypothesis \ref{H1} and $\bF$-(PRP) hold. Under these assumptions, 
 $\mathcal M(\bG,\bP)=\emptyset$ if and only if $\mathcal S(\bG,\bP)\cap  \mathcal I(\bF,\bG,\bP)=\emptyset$.
\end{Cor}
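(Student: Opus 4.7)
The plan is to invoke Theorem \ref{ThmGenH} directly and extract the claim as the contrapositive of one of the equivalences proved there. Specifically, the equivalence between (i) and (iv) in Theorem \ref{ThmGenH} can be written as
\[
\mathcal M(\bG,\bP)\neq \emptyset \;\Longleftrightarrow\; \mathcal I(\bF,\bG,\bP)\cap\mathcal{S}(\bG,\bP)\neq \emptyset,
\]
under the standing assumptions Hypothesis \ref{H1} and $\bF$-(PRP). Taking the logical negation on both sides of this equivalence yields exactly the statement of the corollary.

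The only thing worth emphasising explicitly is that the assumptions of the corollary match those of Theorem \ref{ThmGenH}, so no further verification is needed. I would write the argument in two short lines: first recall the equivalence (i) $\Leftrightarrow$ (iv), then note that $\mathcal M(\bG,\bP)=\emptyset$ is the negation of (i) while $\mathcal S(\bG,\bP)\cap\mathcal I(\bF,\bG,\bP)=\emptyset$ is the negation of (iv), so the two are equivalent.

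There is no real obstacle to overcome: the content of the corollary is already contained in Theorem \ref{ThmGenH}. The proof is one or two sentences at most, and its purpose is mainly to highlight for the reader the interpretation that obstruction of the immersion property under every supermartingale measure is precisely what distinguishes the case where short-selling is necessary to realise the insider's arbitrage.
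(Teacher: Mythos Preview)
Your proposal is correct and matches the paper's own treatment: the paper introduces this corollary explicitly as ``another formulation for the equivalence between (i) and (iv)'' of Theorem~\ref{ThmGenH}, without a separate proof. Your observation that the corollary is the contrapositive of (i)~$\Leftrightarrow$~(iv) is exactly the intended argument.
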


We illustrate this situation by a simple example using honest times, inspired by the more general construction available in \cite{Nikegh06a}. We refer to \cite{FontJeanSong14} for an analysis of arbitrages arising from knowledge of honest times (where short-sales restrictions are not studied). These references provide a general setting, surely also suitable for a more systematic study of the implications of short-sale restrictions in progressive enlargements of filtrations with honest times (that is beyond the scope here). 

\begin{example*}We assume the filtration $\bG$ is the progressive enlargement of the filtration $\bF$ with the last passage time of the stock price at its supremum level. More exactly let $X=\mathcal E(\sigma W_t)$ where $W$ is $(\bF,\bP)$ Brownian motion. We notice that  $\lim_{t\to\infty } X_t =
0$. Let $S_t = \sup_{s\leq t} X_s$ and
\[
g = \sup\{t\geq 0: X_t=S_\infty\} = \sup\{t\geq 0: S_t-X_t=0\}.
\]We let $\bG$ be the smallest filtration that satisfies the usual assumptions and  so that $g$ is a $\bG$ stopping time. We thus assume the insider observes $g$ when it occurs. 
The random time $g$ is a last passage time within the filtration $\bF$ and moreover the decomposition of the price process in the filtration $\bG$ is given by (cf. Proposition 2.5 in \cite{NikeghYor06}):
\begin{align*}
X_t&=1+\int_0^{t\wedge g} \frac{1}{X_s}d\langle X\rangle_s -\int_{g}^{t\vee g} \frac{1}{S_s-X_s}d\langle X\rangle_s+M_t\\
&=1+\sigma^2\int_0^{t\wedge g}X_sds -\sigma^2\int_{g}^{t\vee g} \frac{X^2_s }{S_s-X_s}ds+\sigma \int_0^tX_sd\hat W_s.
\end{align*}
Where $M= \sigma \int Xd\hat W$ corresponds to the local martingale in the decomposition (\ref{Xdecomp}) with  $\hat W$ being a $(\bG,\bP)$ Brownian motion. We now consider a finite horizon, that is $t\in[0,T]$, so that the process $X$ is stopped at $T$. Then,  there exists a minimal supermartingale measure and it satisfies $d\bQ^{\mathcal S}= \mathcal E (- \sigma \hat W_{T\wedge g})\cdot d\bP$ on $\G_T$. Indeed, $\bQ^{\mathcal S}\in\mathcal S_T(\bG,\bP)$ as one can easily verify.  However, there is no martingale measure in the filtration $\bG$: $\mathcal I_T(\bG,\bP)=\emptyset$ and $\mathcal M_T(\bG,\bP)=\emptyset$. Indeed, progressive enlargement with honest times are well known examples of models where insiders can have free lunches with vanishing risk (see for instance Theorem 4.1 in \cite{FontJeanSong14}). In conclusion, in this example the insider needs short-selling the stock in order to realise an arbitrage. Indeed, selling the asset $X$ short at $g\wedge T$ and holding the cash obtained, then unwind the positions at  $T$ provides an arbitrage (NFLVR). In conclusion, short-sales prohibitions can be effective in eliminating arbitrages for this example of insider information.

\end{example*}

Now, we want to shed some additional light on the connections between $\bQ^{\mathcal S}$ and $\bQ^{\mathcal M}$, namely, we provide conditions under which both fail to be probability measures (in Theorem \ref{ThmH} below).
As we shall only focus on a bounded time horizon $[0,T]$, we introduce the stopped filtrations $\bF_T=(\F_{T\wedge t})$ and $\bG_T=(\G_{T\wedge t})$, and we suppose that all stochastic processes are also stopped at $T$ without introducing further notation.  We also shall assume that $\bF_T\overset{\bP}{\nhookrightarrow}\bG_T$, that is, so that our model is compatible with the (possible) failure of  (NFLVR) in the filtration $\bG$ (the terminal value of the portfolio will be nonnegative and strictly positive with positive probability).

 Let us introduce the following set of probability measures:
\begin{equation*}
 \mathcal P_T(\bF,\bG,\bP):=\left \{ \widetilde \bQ \text{ probability on }(\Omega,\G_T)\;\biggm | \;
 \begin{array}{l}
 \bF_T\overset{\widetilde \bQ}{\hookrightarrow}\bG_T,\\
\bP\ll\widetilde \bQ;\; \bP\sim \widetilde \bQ \text{  on }\F_T.
 \end{array}
 \right  \}.
\end{equation*}
We have that $ \mathcal I(\bF,\bG,\bP)\subset  \mathcal P_\infty(\bF,\bG,\bP)$.

Absolutely continuous, but not equivalent, changes of measure may lead to arbitrage opportunities in the (NFLVR) sense, see e.g. Delbaen and Schachermayer \cite{DelbScha95}, Osterrieder and Rheinl\"ander \cite{OstRhe06},  Ruf and Runggaldier \cite{RufRung13} and Chau and Tankov \cite{ChauTank13}.

\begin{notation}
Suppose that $\bQ\in \mathcal P_T(\bF,\bG,\bP)$.  We denote by $V_T(\bQ)$ the set of $(\bG_T,\bQ)$  local martingales that admit the representation $(c+\int_0^{\cdot\wedge T} H_s dm_s)$, where $c$ is constant, $m$ is an $(\bF_T,\bQ)$  local martingale  and $H$ is a $\bG_T$ predictable process such that $\int_0^{\cdot\wedge T} H_s^2d\langle m\rangle_s $ is locally integrable. As the immersion property holds for any $\bQ\in \mathcal P_T(\bF,\bG,\bP)$, the stochastic integrals introduced above are well defined.

\end{notation}

In our next two results we will show that for $\bQ^{\mathcal M}$ or $\bQ^{\mathcal S}$ to be probabilities on $\G_T$, a certain subset of $ \mathcal P_T(\bF,\bG,\bP)$ plays a central role,  namely
\begin{equation*}
\mathcal p_T(\bF,\bG,\bP):=\left\{\widetilde \bQ\in \mathcal P_T(\bF,\bG,\bP)
\; \Big| \;
  \frac{d\bP}{d\widetilde\bQ }\in V_T(\widetilde\bQ) 
 \right  \}.
\end{equation*}

\begin{Theorem}\label{ThmQstar}
We assume that Hypothesis \ref{H1} and $\bF$-(PRP) hold. If $p_T(\bF,\bG,\bP)\neq \emptyset$, then there exists a  unique probability measure $\bQ^*$ that satisfies:
\begin{itemize}
\item[(i)] $\bQ^*\in \mathcal p_T(\bF,\bG,\bP)$.
\item[(ii)] The price process $X$ is a $(\bG_T, \bQ^*)$ local martingale.
\end{itemize}
Assume furthermore  Hypothesis \ref{H2}  and assumption (C) hold. Then:
\begin{itemize}
\item[(iii)]  The process  
$$
D^*_t:=\E^{\bQ^*}\left [\frac{d\bP}{d\bQ^{*}}\;\Big | \;\G_{t\wedge T}\right],\quad t\geq 0
$$ admits the $(\bG_T,\bQ^*)$ representation:
\begin{equation}\label{defD}
D^*_t=1+\int_0^t D^*_{s}\alpha_sdX_s.
\end{equation}
It follows that $\bQ^*\sim \bP$  on $\G_T$ if and only if $\bQ^*= \bQ^\mathcal M\in\mathcal M_T(\bG,\bP)$.
\end{itemize}
\end{Theorem}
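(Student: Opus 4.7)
The plan is to build $\bQ^*$ from any $\widetilde\bQ \in \mathcal p_T(\bF,\bG,\bP)$ by correcting its $\F_T$-restriction to coincide with the unique equivalent martingale measure $\bP^*$ of Remark \ref{remPstar}. Put $\eta := d\bP^*/d\widetilde\bQ|_{\F_T}$, which is $\F_T$-measurable and $\widetilde\bQ$-a.s.\ strictly positive since $\bP^* \sim \bP \sim \widetilde\bQ$ on $\F_T$, and define $d\bQ^* := \eta\, d\widetilde\bQ$ on $\G_T$. Because $\eta$ is $\F_T$-measurable, the Bayes computation $\E^{\bQ^*}[\xi|\G_t] = \E^{\widetilde\bQ}[\xi\eta|\G_t]/\E^{\widetilde\bQ}[\eta|\G_t]$ for $\xi\in L^1(\F_T)$, combined with immersion under $\widetilde\bQ$, transfers $\bF_T\hookrightarrow\bG_T$ to $\bQ^*$. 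Hence $X$, being an $(\bF,\bP^*)=(\bF,\bQ^*)$ local martingale, becomes a $(\bG_T,\bQ^*)$ local martingale, yielding (ii). For (i), $\bP\sim\bQ^*$ on $\F_T$ is by construction, $\bP\ll\bQ^*$ on $\G_T$ follows from $\bP\ll\widetilde\bQ$ and $\bQ^*\sim\widetilde\bQ$, while $d\bP/d\bQ^* = (d\bP/d\widetilde\bQ)/\eta \in V_T(\bQ^*)$ follows by translating the $V_T(\widetilde\bQ)$-representation into a representation against an $(\bF,\bQ^*)$ local martingale and invoking $\bF$-(PRP) under $\bP^*$ to identify such local martingales with stochastic integrals of $X$.

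For uniqueness, let $\bQ_1,\bQ_2$ both satisfy (i)+(ii). Since $X$ is $\bF$-adapted and a $(\bG,\bQ_i)$ local martingale, immersion under $\bQ_i$ makes it an $(\bF,\bQ_i)$ local martingale, so uniqueness of the martingale measure in the complete $(\bF,\bP)$-market forces $\bQ_i=\bP^*$ on $\F_T$. The density process $D^{(i)}_t := \E^{\bQ_i}[d\bP/d\bQ_i|\G_t]$ lies in $V_T(\bQ_i)$, and combining this with $\bF$-(PRP) yields $D^{(i)}_t = 1+\int_0^t H^{(i)}_s\,dX_s$ for some $\bG$-predictable $H^{(i)}$. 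Under Hypothesis \ref{H2} and (C), a partial Girsanov argument based on $\bP\ll\bQ_i$ matches the $\bP$-drift $\int\alpha\,d\langle M\rangle$ of $X$ with $(H^{(i)}/D^{(i)}_-)\,d\langle M\rangle$ --- arising from $(1/D^{(i)}_-)\,d\langle D^{(i)},X\rangle^{\bQ_i}$ and the identity $\langle X\rangle^{\bQ_i}=\langle M\rangle$ in the continuous case --- forcing $H^{(i)}=D^{(i)}_-\alpha$; continuity of $X$ collapses $D^{(i)}_-$ to $D^{(i)}$ and delivers the SDE in (iii). Its solution $D^{(i)}=\mathcal E(\int\alpha\,dX)$ is pathwise, so $D^{(1)}=D^{(2)}=:D^*$ a.s.; together with the common $\F_T$-marginal $\bP^*$, this identifies $\bQ_i$ and yields uniqueness.

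For the final equivalence in (iii), $\bQ^*\sim\bP$ on $\G_T$ is tantamount to $D^*_T>0$ $\bQ^*$-a.s., in which case $d\bQ^*/d\bP = 1/D^*_T$. Substituting the Hypothesis \ref{H2} decomposition of $X$ into the continuous exponential yields $1/D^*_t = \mathcal E_t(-\int\alpha\,dM) = R_t$ under $\bP$, so $d\bQ^*/d\bP = R_T$ matches the density of the minimal martingale measure and hence $\bQ^*=\bQ^{\mathcal M}\in\mathcal M_T(\bG,\bP)$; the converse is built into the definition of $\bQ^{\mathcal M}$. The main technical obstacle will be the rigidity argument transferring $V_T(\widetilde\bQ)$-representability to $V_T(\bQ^*)$ under the $\eta$-change, and leveraging this combined rigidity to upgrade uniqueness from the $\bP$-full set $\{D^*_T>0\}$ to the whole of $\G_T$.
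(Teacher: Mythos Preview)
Your construction is identical to the paper's: your $\eta$ equals the paper's $\delta_T e_T$ (with $\delta_t = d\bP^*/d\bP|_{\F_t}$ and $e_t = d\bP/d\widetilde\bQ|_{\F_t}$), immersion is transferred by the same $\F_T$-measurable density argument (the paper cites Lemma 4.3 of \cite{CocJeanNik12}), and (iii) is obtained by the same Girsanov/Doob--Meyer matching to identify the integrand with $\alpha$. The paper is more explicit than you on (i) --- it writes $e = \mathcal E(\int h\,dm^{\bF,\widetilde\bQ})$ and, using that membership in $V_T(\widetilde\bQ)$ kills the orthogonal part, $d\bP/d\widetilde\bQ = \mathcal E(\int H\,dm^{\bF,\widetilde\bQ})$, whence $D^*_T = \mathcal E_T(\int(H-h-a)\,dX) \in V_T(\bQ^*)$ directly --- and it addresses your final concern by noting that the additional $\bP$-null sets lie in $\{D^*_T=0\}$ where $D^*$ is absorbed; but, like you, it does not isolate a standalone uniqueness argument for $\bQ^*$ beyond this representation.
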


\proof 
The set $\mathcal p_T(\bF,\bG,\bP)$ contains equivalent probabilities and therefore we shall introduce $\bQ^*$ via a Girsanov transformation,  relatively to an arbitrary element in $\mathcal p_T(\bF,\bG,\bP)$.

Let $ \bQ\in \mathcal p_T(\bF,\bG,\bP)$. 
Then there is a nonnegative $(\bG_T, \bQ)$ martingale $E$ with $d\bP=E_T\cdot  \bQ$ on $\G_T$ and a strictly positive $(\bF_T, \bQ)$ martingale $e$ with $d\bP=e_T\cdot  \bQ$ on $\F_T$.  

If Hypothesis \ref{H1} together with $\bF$-(PRP) hold, there is a unique local martingale measure for the price process $X$ and in the filtration $\bF$, that is, there is a unique  $(\bF, \bP)$ uniformly integrable martingale $\delta$ with $\delta$ being strictly positive $\delta_0=1$, such that the process $X\delta$ is  $(\bF, \bP)$ local martingale.  Furthermore,  $\delta$ admits a representation as $$\delta_t=\mathcal E_t\left(-\int_0^\cdot a_s dm_s^{\bF,\bP}\right)\quad t\geq 0$$ (in line with the representation of the asset price given in (\ref{XdecompF})). 
Hence, an equivalent local martingale measure for the common ($\bF$ informed) investors may be defined on $\G$ via
$$
d\bP^*:=\delta_\infty \cdot d\bP\quad \text{ an }\G.$$

We are after a probability $\bQ^*\in  \mathcal p_T(\bF,\bG,\bP)$ that may potentially serve as an equivalent martingale measure for the insider, as the price process $X$ is required  to be a $(\bG_T, \bQ^*)$ local martingale, by statement (ii) of the theorem. We introduce $\bQ^*$ via:  
$$
d\bQ^*=\delta_T e_T\cdot  d\bQ\quad \text{ on $\G_T$}.
$$  
It can be checked that $\bQ^*\sim  \bQ$ and furthermore $\bQ^*$ concides with $\bP^*$ on $\F_T$:
\[
\frac{d\bQ^*}{d  \bQ}\Big|_{\F_T}=\delta_t \times e_t=\frac{d\bP^*}{d \bP}\Big|_{\F_T}\times\frac{d\bP}{d  \bQ}\Big|_{\F_T}=\frac{d\bP^*}{d  \bQ}\Big|_{\F_T}.
\]
Consequently,  $\bQ^*$ is a well defined change of measure on $\bF_T$ with the Radon-Nikod\'ym density process $(\delta_t e_t)$ being a strictly positive $(\bF,  \bQ)$ martingale. Finally, as the (H) hypothesis holds under $ \bQ$, the process $(\delta_t e_t)$ is also a  $(\bG,  \bQ)$ martingale, so that the mesure $\bQ^*$ is well defined also on $\G_T$. 

One can see that $\bQ^*\in  \mathcal P_T(\bF,\bG,\bP)$. This is the consequence of the fact that the immersion property is preserved under changes of measure using $\F_\infty$  measurable Radon-Nicod\'ym derivatives (see Lemma 4.3 in\cite{CocJeanNik12}).  Indeed, we have $\bF \overset{ \bQ}{\hookrightarrow}\bG$ and $\frac{d\bQ^*}{d \bQ}$ is $\F_T$ measurable therefore $\bF \overset{ \bQ^*}{\hookrightarrow}\bG$ and so that indeed 
\begin{equation}\label{QstarP}
\bQ^*\in  \mathcal P_T(\bF,\bG,\bP).
\end{equation} 
We can now prove the statements of the theorem:
\begin{itemize}
\item[(ii)] Using Girsanov's theorem in the filtration $\bF_T$, we obtain that the following is a $(\bF_T,\bQ^*)$ martingale:

\[
m_t^{\bF, \bQ^*}: = m_t^{\bF, \bP}-\int_0^t \frac{d\llangle\delta, m^{\bF, \bP}\rrangle_s}{\delta_{s-}}= \int_0^t a_s d\llangle m^{\bF, \bP} \rrangle_s + m_t^{\bF, \bP}=X_t-X_0.
\]As $\bF \overset{ \bQ^*}{\hookrightarrow}\bG$, it follows that $m^{\bF, \bQ^*}$ is also a $(\bG_T,\bQ^*)$ martingale, so that the $(\bG_T, \bQ^*)$ semimartingale decomposition of the asset price $X$ is:
\begin{equation}\label{intqpG}
X_t=X_0+m^{\bF,  \bQ^*}_t.
\end{equation}
 that is a $(\bG_T,\bQ^*)$ martingale.

\item[(i)] In view of (\ref{QstarP}), we only need to show that $\frac{d\bP}{d \bQ^*}|_{\G_T}=\frac{d\bP}{d  \bQ}|_{\G_T}\times \frac{d  \bQ}{d \bQ^*}|_{\G_T}=\frac{E_T}{e_T\delta_T}\in V_T(\bQ^*)$. We first provide $(\bG_T, \bQ)$ representations for the processes $e$ and $E$. 

Because $\bF$-(PRP) holds and $ \bQ\sim\bP$ on $\F_T$, it follows that in the filtration $\bF$ the $ \bQ$ martingales also have the predictable representation property, with respect to the  martingale $ m^{\bF, \bQ}= m^{\bF, \bP}-\int\frac{d\llangle e_{-},m^{\bF, \bP}\rrangle}{e_{-}}$ (see \cite{HeWangYan92}, Theorem 13.12]). Then, there exists an $\bF$ predictable process $h$ so that 
$$
e=\mathcal E\left (\int hdm^{\bF, \bQ}\right).
$$ 
As (H) holds under $ \bQ$, $m^{\bF, \bQ}$ is also a $ \bQ$ martingale in the larger filtration $\bG_T$. Then,   $e$ is also a  $(\bG_T, \bQ)$ martingale,  while  $E$ has the representation 
$$
E=\mathcal E\left(N+\int H dm^{\bF, \bQ}\right)
$$ with $N$ a $(\bG_T, \bQ)$ local martingale orthogonal to $m^{\bF, \bQ}$, and where $H$ is a $\bG$ predictable process. This representation of $E$ is $ \bQ-a.s.$ unique on the set $[0,T^0]$ with $T^0=\inf\{t\geq 0: E_t=0\}$. 
As $ \bQ\in \mathcal p_T(\bF,\bG,\bP)$, we observe that $N_{\cdot \wedge T_0}\equiv 0$ since  $E\in  V_T( \bQ)$. This leads to the representation  
$$
E=\mathcal E\left (\int H dm^{\bF, \bQ}\right).
$$
Moreover, the condition $\E_{ \bQ}[E_t|\F_t]=e_t$ leads to a relation between $H$ and $h$, (cf.  Br\'emaud and Yor \cite{bremaudyor}, Proposition 1) namely
$$
(h_t)\text{ is the $(\bF,\bP)$ predictable  projection of }(H_t).
$$ 

 Using all the above representations, we obtain:
\[
D^*_T:=\frac{d\bP}{d \bQ^*}|_{\G_T}=E_T(e_T\delta_T)^{-1}=\mathcal E_T\left (\int_0^\cdot H_t-(h_t+a_t)dX_t\right)\in V_T(\bQ^*)
\]
that proves the statement (i). 
\end{itemize}
We now assume Hypothesis \ref{H2} and (C) and prove the last statements.
\begin{itemize}
 \item[(iii)]  It is sufficient to prove the relation $R=(D^*)^{-1}$ $\bP$ a.s. Let us denote $\beta:= (h-H+a)$ so that $D^*_t:\frac{d\bP}{d \bQ^*}|_{\G_t}=\mathcal E_t\left (\int_0^\cdot \beta_sdX_s\right)$. Given   $ \bQ^*\ll\bP$ on $\G_T$,  we may apply a Girsanov transformation and obtain that $\hat M_t:=m^{\bF,\bQ^*}_t-\int_0^t\beta_s d\langle m^{\bF,\bQ^*}\rangle_t$ is a $(\bG, \bP)$  local martingale. Given the assumption  (C),  $\langle m^{\bF,\bQ^*}\rangle=\langle \hat M\rangle$, that is, the quadratic variation.  Consequently, we obtain the following $(\bG, \bP)$ representation of  $X$:
\begin{equation}
X_t=X_0+\int_0^t\beta_s d\langle \hat M\rangle_t+\hat M_t.
\end{equation}
By analysing Hypothesis \ref{H2}, we see that the uniqueness of a Doob Meyer decomposition imposes the following relations $\bP$ a.e.: $\hat M=M$, leading to $\langle \hat M\rangle =\langle M\rangle=\langle X\rangle$ and also $\beta\ind_{\{d\langle M\rangle >0\}} =\alpha\ind_{\{d\langle M\rangle >0\}}$. This proves that indeed $D^*_t=\mathcal E_t\left (\int_0^\cdot \alpha_tdX_t\right)=\mathcal E_{t\wedge T^0}\left (\int_0^\cdot \alpha_tdX_t\right)$ holds $\bP$-a.s.. The equality also holds  $\bQ$-a.s  because the possibly additional null sets of $\bP$ are included in $\{D_T^*=0\}=\{E_T=0\}=\{T^0<T\}$.
It follows that indeed $R=(D^*)^{-1}$ $\bP$ a.s. and is a true $(\bG,\bP)$ martingale, if and only if $\bQ^*(T^0<T)=0$, that is $\bQ^*\sim\bP$.
\end{itemize}

 \finproof

If  $\bP$ and $\bQ^*$ are not equivalent on $\G_T$, it may still be the case that $\mathcal M_T(\bG,\bP)\neq \emptyset$. Indeed, one cannot exclude a priori the existence of a probability $\widetilde \bQ$ satisfying $\widetilde \bQ\in\mathcal I(\bF,\bG,\bP)$ ( by Theorem \ref{ThmGenH} (ii) this is indeed equivalent to $\mathcal M_T(\bG,\bP)\neq \emptyset$).  In such a case the measure $\bQ^m$ is not a probability, but there exist equivalent local martingale measures for the price process in $\bG$.

In the next theorem we emphasise that quite strong relations between $\bQ^\mathcal M$ and $\bQ^\mathcal S$ exist whenever the reference measure $\bP$ is a local martingale measure for the price process $X$, within the smaller filtration $\bF$ (i.e., this situation is obtained for $(a_t)\equiv 0$ in (\ref{XdecompF}), or,  equivalently $\bP=\bP^*$ on $\F_\infty$, where $\bP^*$ is as in Remark \ref{remPstar}).
\begin{Theorem}\label{ThmH}
 We assume Hypotheses \ref{H1} and \ref{H2} together with conditions (C) and $\bF$-(PRP).  In addition, we consider $X$ is a $(\bF,\bP)$ local martingale (in the representation (\ref{XdecompF}), the process $(a_t)$ is taken to be null). Suppose  $\mathcal p_T(\bF,\bG,\bP)\neq \emptyset$ and let $\bQ^*$ be such as in Theorem \ref{ThmQstar}.
 Then, the following hold:
\begin{itemize}
\item[(a)] If   $\bP$ and $\bQ^*$ are not equivalent on $\G_T$,    then $\bQ^{\mathcal M}\notin \mathcal M_T(\bG,\bP)$ and $\bQ^{\mathcal S}\notin \mathcal{S}_T(\bG,\bP)$. \\
\item[(b)] If $\bP\sim\bQ^*$  on $\G_T$  and $\frac{d\bP}{d\bQ^*}|_{ \G_T}\in L^2(\Omega,\G,\bQ^*)$, then $\bQ^{\mathcal S}\in  \mathcal{S}_T(\bG,\bP)$ and $\bQ^{\mathcal M}\in\mathcal M_T(\bG,\bP)$.
\end{itemize}
\end{Theorem}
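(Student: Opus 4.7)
My plan is to analyze both claims through the identity $R=(D^*)^{-1}$ from Theorem~\ref{ThmQstar}(iii), together with an analogous identity for the positive-part Doléans exponential.

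\textbf{Common preliminary.} Under $\bQ^*$, $X$ is a continuous local martingale, so I introduce the strictly positive $(\bG_T,\bQ^*)$-local martingales
\[
D^{*,(+)}:=\mathcal E\!\left(\textstyle\int\alpha^+\,dX\right),\qquad D^{*,(-)}:=\mathcal E\!\left(-\textstyle\int\alpha^-\,dX\right).
\]
Because $\alpha^+\alpha^-\equiv 0$ and $X$ is continuous, $[D^{*,(+)},D^{*,(-)}]=0$, so Yor's product formula yields $D^*=D^{*,(+)}D^{*,(-)}$. A direct computation using Hypothesis~\ref{H2} (the $\bP$-decomposition $X=X_0+M+\int\alpha\,d\langle M\rangle$) and $\alpha^+\alpha=(\alpha^+)^2$ gives $R^{(+)}D^{*,(+)}=1$ $\bP$-a.s. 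Change of measure then delivers the master identity
\begin{equation}\label{eq:bridge}
\E^{\bP}[R^{(+)}_T]=\E^{\bQ^*}\!\left[D^{*,(-)}_T\,\ind_{\{D^{*,(+)}_T>0\}}\right]\le \E^{\bQ^*}[D^{*,(-)}_T]\le 1.
\end{equation}

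\textbf{Part (a).} The claim on $\bQ^{\mathcal M}$ is the contrapositive of the last sentence of Theorem~\ref{ThmQstar}(iii). For $\bQ^{\mathcal S}$, I argue by contradiction: if $\bQ^{\mathcal S}\in\mathcal S_T(\bG,\bP)$ then $R^{(+)}$ is a true $\bP$-martingale, so $\E^{\bP}[R^{(+)}_T]=1$ and both inequalities in~\eqref{eq:bridge} are equalities. The outer one gives $\E^{\bQ^*}[D^{*,(-)}_T]=1$, so $D^{*,(-)}$ is a true $\bQ^*$-martingale; under (C) a positive continuous Doléans exponential can be a true martingale on $[0,T]$ only if its integrand is $d\langle X\rangle$-square-integrable there, so $D^{*,(-)}_T>0$ $\bQ^*$-a.s. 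The inner equality $\E^{\bQ^*}[D^{*,(-)}_T\ind_{\{D^{*,(+)}_T=0\}}]=0$ then forces $\bQ^*(D^{*,(+)}_T=0)=0$, whence $\bQ^*(D^*_T>0)=1$, i.e.\ $\bP\sim\bQ^*$, contradicting the hypothesis.

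\textbf{Part (b).} The claim on $\bQ^{\mathcal M}$ is immediate from Theorem~\ref{ThmQstar}(iii), which already ensures that $R=(D^*)^{-1}$ is a strictly positive UI $\bP$-martingale under $\bP\sim\bQ^*$. The $L^2(\bQ^*)$ bound on $D^*_T$ is then brought in to upgrade $R$ to a square-integrable $\bP$-martingale: via Itô's isometry applied to $D^*_t=1+\int_0^tD^*_s\alpha_s\,dX_s$ in $L^2(\bQ^*)$, combined with the orthogonal factorization $D^*=D^{*,(+)}D^{*,(-)}$ (with $\bQ^*$-a.s.\ strictly positive factors, guaranteed by $\bP\sim\bQ^*$), one verifies $\E^{\bP}[R_T^2]<\infty$. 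With $R$ strictly positive and in $L^2(\bP)$, Theorem~\ref{TheoremGeneral}(a) applies and delivers $\bQ^{\mathcal S}\in\mathcal S_T(\bG,\bP)$.

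\textbf{Main obstacle.} The delicate step is the supermartingale half of (a): turning $\bQ^*(D^*_T=0)>0$ into the sharper strict inequality $\E^{\bP}[R^{(+)}_T]<1$. Identity~\eqref{eq:bridge} is the essential bridge, and continuity (C) supplies the needed rigidity: a continuous positive Doléans exponential is a $\bQ^*$-true-martingale on $[0,T]$ if and only if it stays $\bQ^*$-a.s.\ strictly positive there, and it is precisely this rigidity that forces the $\bQ^*$-a.s.\ positivity of both factors $D^{*,(\pm)}$, yielding the contradiction $\bP\sim\bQ^*$.
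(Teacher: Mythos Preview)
Your factorization $D^*=D^{*,(+)}D^{*,(-)}$ and the bridge identity are exactly what the paper uses, but the argument breaks down at the step you yourself flag as delicate.

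\textbf{Part (a), the real gap.} The implication ``$\E^{\bQ^*}[D^{*,(-)}_T]=1\Rightarrow D^{*,(-)}_T>0$ $\bQ^*$-a.s.'' is false. A nonnegative continuous Dol\'eans exponential, extended by $0$ after the explosion time of its quadratic variation, can be a genuine martingale and still vanish on a set of positive measure: take $Y_t=1+W_{t\wedge\sigma}$ with $\sigma=\inf\{t:W_t=-1\}$; then $Y=\mathcal E\bigl(\int_0^{\cdot}Y_s^{-1}\ind_{\{s<\sigma\}}dW_s\bigr)$, $\E[Y_T]=1$ for every $T$, yet $\bP(Y_T=0)=\bP(\sigma\le T)>0$ and $\int_0^T Y_s^{-2}\ind_{\{s<\sigma\}}ds=\infty$ on $\{\sigma\le T\}$. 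So your ``rigidity'' principle does not hold, and the contradiction does not close. Notice also that your argument never invokes the hypothesis that $X$ is an $(\bF,\bP)$ local martingale (equivalently $a\equiv0$, or $\E_{\bQ^*}[G_t\mid\F_t]=0$). The paper's proof (Proposition~\ref{ReprXfromQ} together with Lemma~\ref{tau}) uses this hypothesis in an essential way: from $\E_{\bQ^*}[G_t\mid\F_t]=0$ one gets that $\bQ^*(G_t>0\mid\F_t)>0$ iff $\bQ^*(G_t\neq0\mid\F_t)>0$, and this symmetry is what forces $\bQ^*(D^{*,(+)}_T=0)>0$ whenever $\bQ^*(D^*_T=0)>0$. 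Without that projection condition the conclusion need not hold.

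\textbf{Part (b), a secondary gap.} You claim that $D^*_T\in L^2(\bQ^*)$ yields $R_T\in L^2(\bP)$, but $\E^{\bP}[R_T^2]=\E^{\bQ^*}[D^*_T\,(D^*_T)^{-2}]=\E^{\bQ^*}[(D^*_T)^{-1}]$, which an $L^2(\bQ^*)$ bound on $D^*_T$ does not control. The paper does not go through Theorem~\ref{TheoremGeneral}(a); it applies Proposition~\ref{Y=UZ} directly to the product $D^*=D^{*,(+)}D^{*,(-)}$ of orthogonal positive local martingales to conclude that $D^{*,(-)}$ is a uniformly integrable $\bQ^*$-martingale, whence your own bridge identity gives $\E^{\bP}[R^{(+)}_T]=\E^{\bQ^*}[D^{*,(-)}_T]=1$ immediately.
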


The proof of this theorem is postponed at the end of this section. 

\bigskip

In order to prove Theorem \ref{ThmH},  we take as  reference a probability space $(\Omega, \G, \bQ^*)$, endowed with right-continuous filtrations $\bF^0\subset \bG^0$ satisfying (\ref{fg0}). We suppose that (H) hypothesis holds under $\bQ^*$ and our aim is to recover $\bP$ form $\bQ^*$ with the help of a Girsanov transformation. The key properties of our initial model under $\bP$, in particular  $R$ and $R^{(+)}$  being or not strict local martingales, will appear as resulting from whether $\bP$ is actually equivalent to $\bQ^*$ or only absolutely continuous (see Proposition \ref{ReprXfromQ} below).  The filtrations $\bF$ and $\bG$ will be defined as  completions with the null sets of $\bP$, as before.

The assumptions that are going to be used below are the following:
\begin{itemize}
 \item[(A1)] The filtration $ \bF^0$ is the natural, $\bQ^*$ augmented filtration of a $\bQ^*$-Brownian motion $B$. 
 \item[(A2)] The filtrations $\bF^0$ and $\bG^0$ satisfy the condition (\ref{fg0})  and furthermore $ \bF^0\overset{\bQ^*}{\hookrightarrow} \bG^0$.
% \item[(A3)]  The time horizon is $[0,T]$, $T$ constant and $\G=\G^0_T$ (for simplicity all probabilities will be defined on $\G^0_T$ directly).
\item[(A3)] There is a nonnegative $(\bG^0,\bQ^*)$ martingale having the representation:
\begin{equation}\label{defD}
D^*_t=1+\int_0^t G_sdB_s,
\end{equation}
where $G$ is a $\bG^0$ predictable process satisfying  $\E_{\bQ^*}[G_t|\F^0_t]=0$ for all $t\in[0,T]$. 
\end{itemize}
We now define the probability $\bP$ as:
 \begin{equation}\label{dPdQ}
 \frac{d\bP}{d\bQ^*}\Big|_{\G^0_T}=D^*_{T},
 \end{equation}
where $D^*$ is the martingale defined in (\ref{defD}).

 \begin{Proposition}\label{ReprXfromQ} We consider the probability space $(\Omega,\G,\bQ^*)$ with the assumptions (A1)-(A3) and let $\bP$ be defined as in (\ref{dPdQ}). The filtration $\bF$ is defined as the usual augmentation of  $\bF^0$ with the $\bP$ null sets and similarly, the filtration $\bG$ is defined as the usual augmentation of  $\bG^0$ with the  $\bP$ null sets. 
 
 Let $X$ be  a $(\bF,\bQ^*)$ local martingale.  The following hold:
 \begin{itemize}
 \item[(a)] The  process   $X$ is a $(\bF,\bP)$ local martingale and it is a $(\bG,\bP)$ semimartingale with the Doob Meyer decomposition:
 \begin{equation}\label{Xrepr}
 X_t=X_0+M_t + \int_0^t \alpha_s d\langle M\rangle_s\quad t\in[0,T]
 \end{equation}
 for a $(\bG,\bP)$ local martingale $M$ and a $\bG$-predictable process $\alpha$ having a null $(\bF,\bP)$ optional projection. Thus, $X$ satisfies the Hypothesis \ref{H2} from Section \ref{SectionSetting} under $\bP$, for $t\in[0,T]$.
 \end{itemize}
 Define $R:= \mathcal E\left(-\int_0^\cdot \alpha_ud M_u\right)$ and $R^{(+)}:= \mathcal E\left(-\int_0^\cdot \alpha^+_ud M_u\right)$.
\begin{itemize}
 \item[(b)] If $\bP\sim\bQ^*$ on $G_T$ and $\E_{\bQ^*}[(D^*_T)^2]<\infty$, then, both $R$ and $R^{(+)}$ are true  $(\bG_T,\bP)$-martingales. 
 \item[(c)]  If $\bP\ll\bQ^*$  but $\bP$ is not equivalent to $\bQ^*$ on $G_T$, then both $R$ and $R^{(+)}$ are strict  $(\bG_T,\bP)$-local martingales.
 \end{itemize}
 \end{Proposition}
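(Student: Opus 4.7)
The plan is to derive the $(\bG,\bP)$-semimartingale decomposition of $X$ via a single Girsanov transformation using the density process $D^*$, and then to read off the structure of $R$ and $R^{(+)}$ directly from that decomposition.

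For (a), I would first observe that assumption (A3), combined with the continuity of the Brownian filtration $\bF^0$, implies that the $(\bF^0,\bQ^*)$-predictable projection of $G$ vanishes, so $\E_{\bQ^*}[D^*_t\mid\F^0_t]=1$ for every $t$, i.e.\ $\bP=\bQ^*$ on $\F^0_T$. Hence $X$ remains an $(\bF,\bP)$-local martingale, and $\bF$-(PRP) transports to give $X=X_0+\int\xi\,dB$ for an $\bF$-predictable $\xi$. Girsanov in $\bG$ then asserts that $B^\bP:=B-\int(G/D^*)\,ds$ is a $(\bG,\bP)$-Brownian motion (well-defined $\bP$-a.s., since $D^*>0$ on $[0,T]$ $\bP$-a.s.\ in both settings (b) and (c)), so
\[
X_t=X_0+\int_0^t\xi_s\,dB^\bP_s+\int_0^t\frac{\xi_sG_s}{D^*_s}\,ds,
\]
which identifies $M=\int\xi\,dB^\bP$ and $\alpha=G/(\xi D^*)$ on $\{\xi\neq 0\}$. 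The vanishing of $\,{}^{o,\bP}\alpha$ is then a one-line Bayes/tower computation, using $\E_{\bQ^*}[D^*_T\mid\F_t]=1$:
\[
\E_\bP[\alpha_t\mid\F_t]=\E_{\bQ^*}[\alpha_tD^*_T\mid\F_t]=\frac{1}{\xi_t}\E_{\bQ^*}[G_t\mid\F_t]=0,
\]
by (A3).

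For parts (b) and (c), an It\^o computation shows $d(1/D^*)=-(1/D^*)(G/D^*)\,dB^\bP$ under $\bP$, so $R=1/D^*$. When $\bP\sim\bQ^*$ this is exactly the density $d\bQ^*/d\bP|_{\G_T}$ of a probability measure, hence a true strictly positive $(\bG,\bP)$-martingale, giving $\bQ^\mathcal M\in\mathcal M_T(\bG,\bP)$; when $\bP\ll\bQ^*$ is strict, $\E_\bP[R_T]=\bQ^*(D^*_T>0)<1$ and $R$ is a strict local martingale. For $R^{(+)}$, I would use the orthogonal splitting $R=R^{(+)}R^{(-)}$ together with a short It\^o computation under $\bQ^*$, which gives $D^*R^{(-)}=\mathcal E(\int(G^+/D^*)dB)$ and $1/R^{(-)}=\mathcal E(-\int(G^-/D^*)dB)$, both nonnegative $(\bG,\bQ^*)$-local martingales. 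This yields
\[
\E_\bP[R^{(+)}_T]=\E_{\bQ^*}\!\left[\frac{1}{R^{(-)}_T}\,\ind_{\{D^*_T>0\}}\right],
\]
which in (c) is strictly smaller than $\E_{\bQ^*}[1/R^{(-)}_T]\le 1$ because the indicator excludes a $\bQ^*$-nonnegligible set on which $1/R^{(-)}>0$; while in (b) the $L^2$ hypothesis will be used to upgrade $R$ to a square-integrable $(\bG,\bP)$-martingale, so that Theorem \ref{TheoremGeneral}(a) delivers the uniform integrability of $R^{(+)}$ and hence $\bQ^\mathcal S\in\mathcal S_T(\bG,\bP)$.

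The main technical obstacle I anticipate is the integrability transfer in (b): translating $D^*_T\in L^2(\bQ^*)$ into $L^2(\bP)$-integrability (or at least uniform integrability) of the stochastic exponentials $R$ and $R^{(+)}$. This is not a direct moment estimate and will likely require a change-of-measure manipulation exploiting the identity $R\cdot D^*\equiv 1$, so that control of one factor under one measure becomes control of the other under the opposite measure. A secondary care point in (c) is the handling of the exceptional set $\{D^*_T=0\}$: since $\bP\ll\bQ^*$ makes $R^{(+)}$ finite $\bP$-a.s., the product $R^{(+)}_TD^*_T$ is unambiguously zero there, but one must check that this set genuinely supports the strict shortfall $1-\E_\bP[R^{(+)}_T]>0$ rather than being absorbed by a localizing sequence.
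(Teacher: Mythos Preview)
Your treatment of part (a) is essentially the paper's argument and is fine. The genuine gap is in parts (b) and (c), and it stems from the identification $R=1/D^*$, which is \emph{false in general}.

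Recall that by Hypothesis~\ref{H2} the process $\alpha$ is set to zero on $\{d\langle M\rangle=0\}=\{\xi=0\}$. Hence
\[
-\int_0^\cdot \alpha_u\,dM_u=-\int_0^\cdot \ind_{\{\xi_u\neq 0\}}\frac{G_u}{D^*_u}\,dB^{\bP}_u,
\]
whereas $1/D^*=\mathcal E\bigl(-\int (G/D^*)\,dB^{\bP}\bigr)$ carries the full integrand, including the piece on $\{\xi=0\}$. The paper handles exactly this discrepancy by factoring $D^*=LZ$ (and $D^*=L^{(+)}Z^{(-)}$) along the sets $\{\xi\neq 0\}$ and $\{\xi=0\}$, obtaining $R=1/L$ and $R^{(+)}=1/L^{(+)}$, not $R=1/D^*$. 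Your computation $\E_{\bP}[R_T]=\bQ^*(D^*_T>0)$ therefore does not hold; one only gets $\E_{\bP}[R_T]=\E_{\bQ^*}[Z_T\ind_{\{L_T>0\}}]$, and the whole difficulty in (c) is to show that $\{D^*_T=0\}$ has positive $\bQ^*$-probability \emph{and} is reached through the $L$-factor (resp.\ the $L^{(+)}$-factor). This is a genuine issue---$D^*$ could in principle vanish only while $\xi=0$, leaving $L$ strictly positive---and the paper resolves it via a separate lemma exploiting the centering condition $\E_{\bQ^*}[G_t\mid\F_t]=0$ to prove $\bQ^*(L_T=0)>0\Leftrightarrow\bQ^*(L^{(+)}_T=0)>0\Leftrightarrow\bQ^*(D^*_T=0)>0$. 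Your ``secondary care point'' is thus the crux, not a detail.

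For (b), your plan to invoke Theorem~\ref{TheoremGeneral}(a) requires $R\in L^2(\bP)$, i.e.\ $\E_{\bQ^*}[D^*_T R_T^2]<\infty$; the hypothesis $D^*_T\in L^2(\bQ^*)$ does not deliver this (it controls $D^*$, not $1/L$). The paper instead applies its Proposition~\ref{Y=UZ} directly to the square-integrable $(\bG,\bQ^*)$-martingale $D^*=LZ=L^{(+)}Z^{(-)}$ to conclude that all four factors are uniformly integrable $(\bG,\bQ^*)$-martingales, from which $\E_{\bP}[R_T]=\E_{\bQ^*}[Z_T]=1$ and $\E_{\bP}[R^{(+)}_T]=\E_{\bQ^*}[Z^{(-)}_T]=1$ follow immediately. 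So the $L^2$ assumption is spent under $\bQ^*$ on $D^*$, not under $\bP$ on $R$.
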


 \proof For simplicity, all processes are considered as stopped at $T$.\\
\textit{Proof of claim in (a).}  The process $X$ being an  $(\bF,\bQ^*)$ local martingale by assumption, there is an $\bF$-predictable process $F$  such that
\begin{equation}\label{XFQ}
X=X_0+\int F dB.
\end{equation}
We shall prove that the assumption (A4) implies:
  \begin{equation}\label{LF}
 \frac{d\bP}{d\bQ^*}\Big|_{\F_T}=1
 \end{equation}
which then implies that all  $(\bF,\bQ^*)$-local martingales remain  $(\bF,\bP)$ local martingales, in particular $X$.

The density process $(\frac{d\bP}{d\bQ^*}|_{\F_t})_{t\in[0,T]}$ is given by the $(\bF,\bQ^*)$ optional projection of  the process $D^*$. The Brownian motion $B$ being an $(\bF,\bQ^*)$ and a $(\bG,\bQ^*)$ local martingale (by assumption (A2)), we obtain (see \cite{bremaudyor}):
 \[
\frac{d\bP}{d\bQ^*}|_{\F_t}=\;^{o,\bQ^*}D^*_t=1+\int_0^t \;^{o,\bQ^*}G_sdB_s.
\]
From assumption (A3), $ \;^{o,\bQ^*}G\equiv 0$ and we obtain  $ \;^{o,\bQ^*}D^*\equiv 1$, hence (\ref{LF}) holds true. Consequently,  $X$  is also a  $(\bF,\bP)$-local martingale (by using Girsanov's theorem in the filtration $\bF$). 

In the larger filtration $\bG$ however, using the Lenglart's extension of the Girsanov's theorem, we obtain that the following is a martingale under $\bP$:
\begin{equation*}
M:=X-X_0-\int_0^t\frac{d\langle X,D^*\rangle_s}{D^*_s}= X-X_0-\int_0^{t}\frac{G_sF_s }{D^*_s}ds.
\end{equation*}
Because from (\ref{XFQ}) $\langle X\rangle =\int F^2ds$, we can write 
\begin{equation*}
X=X_0+M+\int_0^t \alpha_sd\langle M,M\rangle_s
\end{equation*}
with 
\begin{equation}\label{alpha}
\alpha_t:=\ind_{F_t\neq 0} \ind_{D^*_t\neq 0}\frac{G_t}{D^*_tF_t}.
\end{equation}  For all $t\geq 0$:
\begin{align*}
\E_\bP[\alpha_t|\F_t]&=\ind_{F_t\neq 0}\E_\bP\left [ \ind_{D^*_t\neq 0}\frac{G_t}{D^*_t}|\F_t\right ] \frac{1}{F_t}=\ind_{F_t\neq 0}\frac{\E_{\bQ^*}\left[\frac{d\bP}{d\bQ^*}\frac{G_t}{ D^*_t}|\F_t\right]}{\E_{\bQ^*}\left[\frac{d\bP}{d\bQ^*}|\F_t\right]F_t} \\
&=\ind_{F_t\neq 0} \frac{\E_{\bQ^*}\left[G_t|\F_t\right]}{\E_{\bQ^*}\left[D^*_t|\F_t\right]F_t}=0,
\end{align*}
thus, $\alpha$ has a null $(\bF,\bP)$ optional projection. The claim in (a) is thus proved.
\bigskip\\
\textit{Proof of claims (b) and (c)}.   
We write $D^*$ as product of orthogonal  $(\bG,\bQ^*)$ local martingales in two different ways. First, we write $D^*=LZ$ with
 \begin{align*}
d L_t &= L_t \frac{G_t}{D^*_t}\ind_{\{F_t\neq 0\}\cap\{D^*_t>0\}}dB_t\\
&= L_t\alpha_tF_tdB_t\\
L_0&=1
 \end{align*} 
 and 
  \begin{align*}
d  Z_t&=Z_t\frac{G_t}{D^*_t}\ind_{\{F_t= 0\}\cap \{D^*_t>0\}}dB_t\\
Z_0&=1
  \end{align*}
Second, we write $D^*=L^{(+)} Z^{(-)}$ with
 \begin{align*}
d L^{(+)}_t &= L^{(+)}_t \frac{(G_t)^+}{D^*_t}\ind_{\{F_t\neq 0\}\cap\{D^*_t>0\}}dB_t\\
&= L^{(+)}_t (\alpha_t)^+F_tdB_t\\
L^{(+)}_0 &=1
 \end{align*} and 
 \begin{align*}
d  Z^{(-)}_t&=Z^{(-)}_t \frac{\ind_{\{D^*_t>0\}} }{D^*_t}\left((G_t)^+\ind_{\{F_t= 0\}}- (G_t)^- \right)dB_t\\
Z^{(-)}_0 &=1.
  \end{align*}
  The following relations can be easily verified (using Ito's formula and the expression of the processes $X$ and  $\alpha$ in (\ref{Xrepr}) resp (\ref{alpha})):
 \begin{align*}
R_t &=\frac{1}{L_t}\\
R^{(+)}&=\frac{1}{L^{(+)}_t}.
 \end{align*}
We denote:
 \begin{align}
\label{t0}  T_0:&=\inf\{t\geq 0:D^*_t= 0\}\wedge T\\
\label{ta}\tau:&=\inf\{t\geq 0:L_t= 0\}\wedge T\\
\label{ta+} \tau^{(+)}:&=\inf\{t\geq 0:L^{(+)}_t= 0\}\wedge T.
\end{align}

We notice that the processes $L$, $L^{(+)}$, $Z$ and $Z^{(-)}$ are  such  that the measure $d\langle L\rangle$  is singular with respect to the measure $d\langle Z\rangle$ and also  the measure $d\langle L^{(+)}\rangle$  is singular with respect to the measure $d\langle Z^{(-)}\rangle$.  These local martingales being continuous, their constancy intervals are determined by the constancy intervals of their sharp bracket; hence $L$ and $Z$ do not co-move and $L^{(-)}$ and $Z^{(+)}$ do not co-move.  Additionally the process $D^*$, a nonnegative martingale, is absorbed at the level 0, once this level attained, and consequently the processes $L$, $L^{(+)}$, $Z$ and $Z^{(-)}$  are also stopped at $T_0$. This shows that the following equalities hold $\bQ^*$ a.s.: 
\begin{align}\label{tauequiv}
\{\tau<T\}&=\{L_T=0\}=\{Z_T>0\}\cap\{T_0<T\};\\\label{tauplusequiv}
\{\tau^{(+)}<T\}&=\{L^{(+)}_T=0\}= \{Z^{(-)}_T>0\}\cap\{T_0<T\}.
 \end{align} 

We know that $R_t =\frac{1}{L_t}$. Therefore:
\begin{align*}
\E_\bP\left[R_T\right]&=\E_{\bQ^*}\left[D^*_T\frac{1}{L_T}\right]=\E_{\bQ^*}\left[D^*_T\frac{1}{L_T}\ind_{L_T>0}\right]=\E_{\bQ^*}\left[Z_T\ind_{L_T>0}\right]\\
&=\E_{\bQ*}[Z_T]-\E_{\bQ^*}\left[Z_T\ind_{L_T=0}\right]
\end{align*}
and from this we deduce that
\begin{itemize}
\item[-]  if   $\bQ^*(\tau<T)>0$ we have $\E_\bP\left[R_T\right]<1$ i.e., $R$ is a strict  $(\bG,\bP)$-local martingale. The strict inequality is justified using (\ref{tauequiv}).
 Indeed, these equalities imply that 
 $$\bQ^*(\tau<T)=\bQ^*(L_T=0)>0$$ and  $\{L_T=0\}\subset \{Z_{T}>0\} $ which leads to $\E_\bQ^*\left[Z_T\ind_{L_T=0}\right]>0$, hence the result.
\item[-]  if   $\bQ^*(\tau<T)=0$ and $Z$ a true $\bQ^*$ martingale, we have $\E_\bP\left[R_T\right]=1$ i.e., $R$ is a true  $(\bG,\bP)$-martingale. 
\end{itemize}
Identical steps as above  for  $R =\frac{1}{L}$ can be used now for $R^{(+)} =\frac{1}{L^{(+)}}$. Indeed, the characterisation of $\tau^{(+)}$ in  (\ref{tauplusequiv}), can be used to prove the following:
\begin{itemize}
\item[-]  if   $\bQ^*(\tau^{(+)}<T)>0$ we have $\E_\bP\left[R^{(+)}_T\right]<1$ i.e., $R^{(+)}$ is a strict  $(\bG,\bP)$-local martingale.
\item[-]  if   $\bQ^*(\tau^{(+)}<T)=0$ and $Z^{(-)}$ a true $\bQ^*$ martingale, we have $\E_\bP\left[R^{(+)}_T\right]=1$ i.e., $R^{(+)}$ is a true  $(\bG,\bP)$-martingale. 
\end{itemize}
The claims (b) and (c) can now be proved using the above analysis and the fact that $\bQ^*(\tau^{(+)}<T)>0$ if and only if $\bQ^*(\tau<T)>0$. For more clarity, the poof of this result is stated as a separate result below (Lemma \ref{tau}). 

In particular, for (b):  As $D^*$ is  by assumption a square integrable  $(\bG,\bQ^*)$-martingale, using Proposition \ref{Y=UZ}, we deduce that, $L$, $L^{(+)}$, $Z$ and $Z^{(+)}$ are  strictly positive and uniformly integrable  $(\bG,\bQ^*)$-martingales. We use Lemma \ref{tau} below together with the above analysis to conclude. For (c) we see that there is no need to have $Z$ and $Z^{(-)}$ be true $\bQ^*$ martingales, so that no more constraints on the process $D^*$ are needed. 
   \finproof
\begin{Lemma}\label{tau} Recall the definitions of $T_0,$ $\tau$ and $\tau^{(+)}$ in equations ~(\ref{t0}),~(\ref{ta}) and ~(\ref{ta+}), respectively. The following are equivalent: (i) $\bQ^*(T_0<T)>0$;  (ii) $\bQ^*(\tau^{(+)}<T)>0$;  (iii) $\bQ^*(\tau<T)>0$.
\end{Lemma}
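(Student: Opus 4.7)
The plan is to chain the implications $(iii)\Rightarrow(i)$, $(ii)\Rightarrow(i)$, $(i)\Rightarrow(iii)$, $(i)\Rightarrow(ii)$. The two straightforward directions $(iii)\Rightarrow(i)$ and $(ii)\Rightarrow(i)$ are immediate consequences of the inclusions $\{\tau<T\}\subseteq\{T_0<T\}$ and $\{\tau^{(+)}<T\}\subseteq\{T_0<T\}$, which are encoded in the set identities (\ref{tauequiv}) and (\ref{tauplusequiv}) established in the preceding proof.

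For the nontrivial direction $(i)\Rightarrow(iii)$, I would exploit the factorization $D^*=LZ$, where $L$ and $Z$ are orthogonal continuous strictly positive local martingales with brackets supported on the disjoint sets $\{F\neq 0\}$ and $\{F=0\}$. On the event $\{T_0<T\}$ the nonnegative martingale $D^*$ is absorbed at zero, which forces the bracket $\langle N_{D^*}\rangle$ of its stochastic logarithm to diverge at $T_0$. Orthogonality yields the additive splitting $\langle N_{D^*}\rangle=\langle N_L\rangle+\langle N_Z\rangle$, so on $\{T_0<T\}$ at least one of $\langle N_L\rangle$ or $\langle N_Z\rangle$ diverges at $T_0$, corresponding respectively to $L_{T_0}=0$ (which gives $\tau<T$) or $Z_{T_0}=0$. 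It remains to rule out the scenario where $\{T_0<T\}$ is entirely absorbed by the divergence of $\langle N_Z\rangle$. This is where Proposition~\ref{Y=UZ} from the appendix enters: applied to the true $(\bG_T,\bQ^*)$-martingale $D^*=LZ$, it provides the integrability of both orthogonal factors needed to derive a contradiction from the simultaneous hypotheses $\bQ^*(T_0<T)>0$ and $\bQ^*(\tau<T)=0$; indeed, the latter forces $L_T>0$ $\bQ^*$-a.s., hence $Z_T=0$ on all of $\{T_0<T\}$, which is incompatible with the identities $\E_{\bQ^*}[L_T Z_T]=\E_{\bQ^*}[D^*_T]=1$ once the factorization and orthogonality are exploited quantitatively.

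The converse $(i)\Rightarrow(ii)$ follows by the same argument applied verbatim to the alternative orthogonal factorization $D^*=L^{(+)}Z^{(-)}$, replacing $(L,Z)$ by $(L^{(+)},Z^{(-)})$ and $\tau$ by $\tau^{(+)}$ throughout. The support of the corresponding brackets is analogously disjoint, and Proposition~\ref{Y=UZ} applies in the same way to ensure the integrability of both factors of this second decomposition.

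\textbf{The main obstacle} is the final contradiction step in $(i)\Rightarrow(iii)$: making precise how Proposition~\ref{Y=UZ} and the uniform integrability of the orthogonal factors prevent a single factor from absorbing all of the vanishing of $D^*$. The orthogonality of $L$ and $Z$ via disjoint bracket supports plays the role of independence, and combined with the extremal constraints $\E_{\bQ^*}[D^*_T]=1$ together with $\E_{\bQ^*}[L_T]\leq 1$ and $\E_{\bQ^*}[Z_T]\leq 1$, this is what I expect to ultimately yield the contradiction.
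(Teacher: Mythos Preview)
Your easy directions $(ii)\Rightarrow(i)$ and $(iii)\Rightarrow(i)$ via the inclusions encoded in (\ref{tauequiv}) and (\ref{tauplusequiv}) are fine and match the paper. The substantive part of your plan, however, does not go through.

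\textbf{The gap.} The contradiction you hope to extract from Proposition~\ref{Y=UZ} is not there. First, that proposition requires the product $D^*=LZ$ to be a \emph{square integrable} martingale, whereas the lemma is stated (and is used for Proposition~\ref{ReprXfromQ}(c)) without any such integrability hypothesis on $D^*$. Second, and more fundamentally, even granting that $L$ and $Z$ are uniformly integrable martingales with $\E_{\bQ^*}[L_T]=\E_{\bQ^*}[Z_T]=1$, nothing prevents $Z$ from absorbing \emph{all} the vanishing of $D^*$: a nonnegative continuous UI martingale is perfectly allowed to hit $0$ on a set of positive probability. Orthogonality of $L$ and $Z$ is not independence; the only quantitative relation it delivers (under $L^2$) is $\E_{\bQ^*}[(L_T-1)(Z_T-1)]=0$, i.e.\ $\E_{\bQ^*}[L_TZ_T]=\E_{\bQ^*}[L_T]+\E_{\bQ^*}[Z_T]-1$, which is consistent with $L_T>0$ a.s.\ and $\bQ^*(Z_T=0)>0$ and produces no contradiction. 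Concretely, whenever $F$ vanishes on a stochastic interval containing $T_0$, the factor $L$ is constant there and $Z$ alone carries $D^*$ to zero; your argument has no mechanism to exclude this.

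\textbf{What the paper does instead.} The paper does not attempt $(i)\Rightarrow(iii)$ via martingale factorization. It first rewrites the events pathwise as
\[
\{\tau<T\}=\{F_{T_0}\neq 0\}\cap\{T_0<T\},\qquad
\{\tau^{(+)}<T\}=\{F_{T_0}\neq 0\}\cap\{G_{T_0}>0\}\cap\{T_0<T\},
\]
uses that $dA_t:=d\ind_{T_0<t}$ does not charge $\{G=0\}$, and then expresses both probabilities as integrals against the $(\bF,\bQ^*)$-dual optional projection $A^{o,\bQ^*}$. The decisive structural input --- which your proposal never invokes --- is assumption~(A3), namely $\E_{\bQ^*}[G_t\mid\F_t]=0$. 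From it one gets $\bQ^*(G_t\neq 0\mid\F_t)>0\iff\bQ^*(G_t>0\mid\F_t)>0$, so the two integrands are positive on exactly the same $\bF$-optional set, and $(ii)\Leftrightarrow(iii)$ follows. This zero-mean condition on $G$ is precisely what ties the sign pattern in the two factorizations together; any argument that ignores it is missing the essential ingredient.
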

\proof  To begin with, we give an alternative characterisation of the sets $\{\tau<T\}$ and $\{\tau^{(+)}<T\}$, that uses the fact that  the martingale $L$ is constant on the set $\{(\omega,t): F_t(\omega)=0\}$; while $L^{(+)}$ is constant on the set $\{(\omega,t): F_t(\omega)=0,G_t(\omega)\leq 0\}$ (these properties can be easily seen from an inspection of their quadratic variation).
\begin{align*}
\{\tau<T\}&=\{T_0=\tau\}\cap\{T_0<T\}=\{F_{T_0}\neq 0\}\cap\{T_0<T\};\\
\{\tau^{(+)}<T\}&= \{T_0=\tau^{(+)}\}\cap\{T_0<T\}\\
&=\{F_{T_0}\neq 0\}\cap\{G_{T_0}>0\}\cap\{T_0<T\}.
 \end{align*} 
 Therefore, denoting by $A:=(\ind_{T_0<t})_{t\in[0,T]}$:
 \begin{align}\nonumber
 \bQ^*(\tau^{(+)}<T)&=\E_{\bQ^*}[\ind_{F_{T_0}\neq 0}\ind_{G_{T_0}>0}\ind_{T_0<T}]=\E_{\bQ^*}\left[\int_0^T\ind_{F_t\neq 0}\ind_{G_t>0}dA_t\right]\\\label{tauplus1}
 &=\E_{\bQ^*}\left[\int_0^T\ind_{F_t\neq 0}\bQ^*(G_t>0|\F_t)dA^{o,\bQ^*}_t\right]
 \end{align}
 and
 \begin{align*}
 \bQ^*(\tau<T)&=\E_{\bQ^*}[\ind_{F_{T_0}\neq 0}\ind_{T_0<T}]=\E_{\bQ^*}\left[\int_0^T\ind_{F_t\neq 0}dA_t\right]. \end{align*}
 We notice that the measure $dA_t$ does not charge the set $\{(\omega,t):G_t(\omega)= 0\}$, that is:
 $$
 dA_t=\ind_{G_t\neq 0}dA_t.
 $$
 This property can be seen as following from  Dubins-Schwarz theorem, which allows to write  
 \[
 T_0=\inf\left\{t:W_{\int_0^tG_s^2ds}=0\right\}
 \]
  where $W$ is a Brownian motion (alternatively,  the hitting time of 0 by $D^*$ cannot occur on the constancy intervals of the process $D^*$).

  Hence:
 \begin{align}\nonumber
 \bQ^*(\tau<T)&=\E_{\bQ^*}\left[\int_0^T\ind_{F_t\neq 0}\ind_{G_t\neq 0}dA_t\right]\\\label{tau1}
 &=\E_{\bQ^*}\left[\int_0^T\ind_{F_t\neq 0}\bQ^*(G_t\neq0|\F_t)dA^{o,\bQ^*}_t\right]
 \end{align}
We remark that because $\E_{\bQ^*}[G_t|\F_t]=0$, we have that $ \bQ^*(G_t<  0|\F_t)>0$ if and only if $ \bQ^*(G_t>  0|\F_t)<0$. Therefore:
 \begin{equation}\label{probaeqsets}
  \bQ^*(G_t\neq  0|\F_t)>0\text{ if and only if } \bQ^*(G_t>  0|\F_t)>0. 
 \end{equation}  
 
From  (\ref{tauplus1}), we see that $\bQ^*(\tau^{(+)}<T)>0$ if and only if the measure $dA^{o,\bQ^*}_t$ charges a subset of $$\mathcal A^{(+)}:= \{(\omega,t)\in\Omega\times [0,T]|F_t(\omega)\neq 0; \bQ^*(G_t>  0|\F_t)(\omega)>0 \}.$$
From  (\ref{tau1}), we see that $\bQ^*(\tau<T)>0$ if and only if the measure $dA^{o,\bQ^*}_t$ charges a subset of 
$$\mathcal A:=\{(\omega,t)\in\Omega\times [0,T]|F_t(\omega)\neq 0; \bQ^*(G_t\neq  0|\F_t)(\omega)>0 \}.$$

But because of (\ref{probaeqsets}) we have $\mathcal A^{(+)}=\mathcal A$. Thus, we have proved that $\bQ^*(\tau^{(+)}<T)>0$ if and only if $\bQ^*(\tau<T)>0$. Obviously, each one of these implies $\bQ^*(T_0<T)>0$.
 \finproof

\begin{rem}\label{rem}The results in Proposition \ref{ReprXfromQ} remain true if we replace the Brownian motion $B$ by the continuous martingale $X$ having the predictable representation property in a filtration $\widetilde \bF$ under $\bQ^*$. More exactly, the statements in Proposition \ref{ReprXfromQ} are true if we can replace the assumptions (A1), (A2), (A3) with the assumptions (A1'), (A2), (A3'), where
\begin{itemize}
\item[(A1')] There is a continuous martingale $X$ having the predictable representation property with respect to $(\bF^0,\bQ^*)$.
\item[(A3')] There is a nonnegative $( \bG^0,\bQ^*)$ martingale having the representation:
\begin{equation}\label{defDbis}
D^*_t=1+\int_0^t G_sdX_s,
\end{equation}
where $G$ is a $\bG$ predictable process satisfying  $\E_{\bQ^*}[G_t|\F^0_t]=0$ for all $t\in[0,T]$. 
\end{itemize}
The proof is identical; we simply have to consider that $F\equiv 1$, where $F$ is the process appearing in the proof of Proposition \ref{ReprXfromQ}, see equation (\ref{XFQ}) and to replace $B$ by $X$ everywhere. 
\end{rem}

\proof (of Theorem \ref{ThmH})
 We consider  the probability space is $(\Omega,\G,\bP)$. We assume Hypotheses \ref{H1} and \ref{H2} together with conditions (C) and $\bF$-(PRP).  Additionally, we suppose that   $ \mathcal p_T(\bF,\bG,\bP)\neq \emptyset$. By Theorem \ref{ThmQstar}, there is also a probability  $\bQ^*\in \mathcal P_T(\bF,\bG,\bP)$, such that  $D^*_t:=\E_{\bQ^*}[\frac{d\bP}{d\bQ^*}| \G_{t\wedge T}]$  has the representation introduced in (\ref{defD}). Using the Remark \ref{rem} and Proposition \ref{ReprXfromQ}, we deduce that: 
\begin{itemize}
\item[-] If $\bP\sim\bQ^*$ and $D^*_T\in L^2(\Omega,\F,\bQ^*)$, then, both $R$ and $R^{(+)}$ are true  $(\bG,\bP)$ martingales. This proves (b).
\item[-] If $\bP\ll\bQ^*$ but not equivalent,  then, both $R$ and $R^{(+)}$ are strict  $(\bG,\bP)$ local martingales. This proves (a).
\end{itemize}
\finproof
\section{Conclusions}
This paper represents a first step in analysing the potential of arbitrage arising from inside information, in presence of short sales prohibitions. Our analysis may be extended in several directions. 
 
The concept of minimal supermartingale measure that we have introduced is independent of the two filtrations setting with inside information. Hence, it may be used for studying the effect of short-sales restrictions in a general market model, without inside information. An important such direction is the study of  local risk minimisation for investors in presence of short sales restrictions. For this purpose, the connections between minimal martingale measures and minimal supermartingale measures may be interesting to further analyse. 

The two filtration setting that we used for the analysis of the inside information is very general; the study of particular enlargements of filtrations, or filtering models can possibly lead to additional, more specific results. We also expect analogous results to those in Section 4 to hold for multidimensional assets, when using the conditions (C) and $\bF$-(PRP). The framework developed for Proposition \ref{ReprXfromQ} is voluntarily more general than is needed for proving Theorem \ref{ThmQstar}, the filtration $\bF$ being larger that the natural filtration of the process $X$. This allows for an easier generalisation to a multidimensional setting.

Finally, we only were focused on arbitrages in the form of free lunches with vanishing risks; other definitions for arbitrages may be considered and  Hypothesis \ref{H1} may be replaced by a weaker condition,  the existence of a supermartingale measure for the common investors. 
\appendix
\renewcommand{\thesubsection}{\Alph{subsection}}
\section{Some useful results on the product of strictly positive, orthogonal  local martingales}\label{appendix}

Given a  uniformly integrable martingale $M$, that can be factorised as $M=UZ$ with strictly positive local martingales martingales $U$ and $Z$, it is of importance to know if and when $U$ and $Z$ are as well uniformly integrable martingales. The question has raised already much attention as it is connected to the existence of the minimal martingale measure. Delbaen and Shachermayer \cite{DelbScha98b} have shown that in general, it is not the case that $U$ and $Z$ are both uniformly integrable, even when one considers continuous processes. Below, we emphasise two particular situations where the implication that $U$ and $Z$ are martingales holds true, relevant  in our search for equivalent supermartingale measures, in particular the minimal supermartingale measure $\bQ^{\mathcal S}$.

In this appendix we use a filtered probability space that we label $(\Omega,\F, \bG,\bP)$ and all stochastic processes are considered adapted to it. Also, $\E$ now stands for $\E^\bP$, as no confusion can occur. 

\bp \label{Y=UZ} Suppose that $U$ and $Z$ are two strictly positive local martingales with  $\Delta U\Delta Z\equiv 0$ and such that the product $M:=UZ$ is a square integrable martingale. Then $U$ and $Z$ are uniformly integrable martingales.
\ep
 \proof
Without loss of generality we assume $U_0=Z_0=1$. 

First, let us notice that because $U$ and $Z$ are positive local martingales, they converge and $\E[U_\infty]\leq 1$,  $\E[Z_\infty]\leq 1$.

Let us define:
\[
\theta^n:=\inf\{t|U_t\leq 1/n\}=\inf\{t|Z_t\geq nM_t\}, n\geq 1.
\]

We show first that $(\theta^n)$ is a localising sequence for $Z$, that  transforms the stopped process $Z^n:=Z_{\theta^n\wedge \cdot}$ in a square integrable martingale. It is sufficient to notice:
\[
0<Z^n_{t}\leq n \max\{M^n _{t-},M^n_{t}\}.
\]

The inequality above is obvious on the set $\{t<\theta^n\}$. We justify now this inequality at $t=\theta^n<\infty$ as follows:

\begin{itemize}

\item[-] If  $\Delta M_{\theta^n}=0$, then $\Delta Z_{\theta^n}=0$, hence $Z_{\theta^n}\leq nM_{\theta^n}$ (with equality on $\{\theta^n<\infty\}$).
\item[-] If $\Delta M_{\theta^n}\neq 0$ and $\Delta Z_{\theta^n}=0$, then necessarily $\Delta M_{\theta^n}< 0$ by definition of $\theta^n$ and hence $ nM_{\theta^n}\leq Z_{\theta^n}\leq  nM_{\theta^n-}$ (the process Z was below $nM$ before the negative jump of $M$ at $\theta_n$).

\item[-] If $\Delta M_{\theta^n}\neq 0$, $\Delta Z_{\theta^n}\neq 0$, and $\Delta U_{\theta^n}= 0$, i.e., $U_{\theta^n}=1/n$, we have: $\Delta M_{\theta^n}=\frac{1}{n}\Delta Z_{\theta^n}$, i.e., $\Delta Z_{\theta^n}=n\Delta M_{\theta^n}$. Hence, $Z_{\theta^n}< nM_{\theta^n-}+\Delta Z_{\theta^n}=n(M_{\theta^n-}+\Delta M_{\theta^n})=nM_{\theta^n}$ (hence this configuration cannot occur, because in contradiction with the definition of $\theta^n$). 

\item[-] Finally, notice that the case $\Delta M_{\theta^n}\neq 0$, $\Delta Z_{\theta^n}\neq 0$, and $\Delta U_{\theta^n}\neq 0$ is excluded by the condition $\Delta U \Delta Z\equiv 0$.

\end{itemize}
This proves that  $Z$ is a locally  square integrable local martingale martingale and so is $U$, by a symmetric argument. 
We denote by $\mathcal Z$ the stable subset of locally square integrable local martingales generated by the stochastic integrals $\int H_sdZ_s$ and by $\mathcal Z^\perp$ the subspace orthogonal to $\mathcal Z$. If follows that $U\in \mathcal Z^\perp$. For any $t\geq 0$, let $\mathcal H_t$ be the $\bP$ complete $\sigma$-field generated by class of random variables  $(N_s,s\leq t)$ with $N$ being a square integrable element of  $\mathcal Z^\perp$.  Thus, we have a filtration $(\mathcal H_t)$ that satisfies the (H) hypothesis,  that is, all $(\mathcal H_t)$ martingales are $\bF$ martingales (see Br\'emaud and Yor \cite{bremaudyor}, Theorem 3 (3) which applies to our construction).    Also, we have that $U_t$ is $\mathcal H_t$ measurable (this is true for any $N_t$ where $N$ is a square integrable element of $\mathcal Z^\perp$ and by using a localisation  argument, for $U_t$ as well). 

We remark that $\theta^n$ is a $(\mathcal H_t)$ stopping time (as is a hitting time by $U$ of a non random level $1/n$). As $Z_{t\wedge \theta^n}$ is a square integrable martingale orthogonal to all $(\mathcal H_t)$ square integrable martingales, and the hypothesis (H) holds, we have that $\;^oZ=1$ and consequently $\E[Z_{\theta^n}|\mathcal H_{\theta^n}]=Z_0=1$. We obtain:
\[
1=\E[M_{\theta^n}]=\E[U_{\theta^n}Z_{\theta^n}]=\E\left [U_{\theta^n}\E[Z_{\theta^n}|\mathcal H_{\theta^n}]\right]=\E\left [U_{\theta^n}\right],
\]
and therefore $\theta^n$ is also a localising sequence for $U$.
Further, for $n\geq 1$:
\begin{align*}
1=\E[U_{\theta^n}]=&\E[U_\infty\ind_{\theta^n=\infty}+U_{\theta^n}\ind_{\theta^n<\infty}]\\
\leq& \E[U_\infty\ind_{\theta^n=\infty}+1/n \ind_{\theta^n<\infty}]\\
\leq & \E[U_\infty\ind_{\theta^n=\infty}]+\bP[\theta^n<\infty].
\end{align*}
It follows:
\[
\E[(U_0-U_\infty)\ind_{\theta^n=\infty}]\leq 0.
\]Since $|U_0-U_\infty|\ind_{\theta^n=\infty}\leq |U_0-U_\infty|\in \mathbb L^1,$ by dominated convergence, $\E[(U_0-U_\infty)]\leq 0$. But as a supermartingale, we also have $\E[(U_0-U_\infty)]\geq 0$. Hence $U$ is a uniformly integrable martingale.

The roles of $U$ and $Z$ being interchangeable, a symmetric reasoning can be made to show that $Z$ is a uniformly integrable martingale. 

\finproof

\bl\label{M=UZ(2)}
Suppose that $M=\mathcal E (L)$ is a strictly positive and uniformly integrable martingale. We consider a c\`adl\`ag predictable process $H$ with state space $\{0,1\}$  and $U:=\mathcal E(\int HdL)$ and $Z:=\mathcal E(\int (1-H)dL)$. Then,  if $H$ has  bounded total variation, then both $U$ and $Z$ are uniformly integrable. 
\el
\proof 
Without loss of generality we shall assume hereafter that $H_0=0$. 

Let $T^0=0$ and for $k\geq 1$, $T^k:=\inf\{t\geq T^{k-1}\;|H_t\neq H_{T^{k-1}}\; \}$ i.e., the successive jumping times of the process $H$, and let $N=(N_t)_{t\geq 0}$ be the counting process associated with $(T^k)_{k\in\mathbb N}$. We have that $H$ has bounded variation if and only if $N$ is bounded.   We denote $T^\infty=\lim_{k\to\infty}T^k$.  Also we denote:
 \[
 M^{(k)}:=\mathcal E(L_{T^k\wedge \cdot}-L_{T^{k-1}\wedge \cdot}),\quad k=1,2,....
 \]We notice that $M^{(k)}$ are orthogonal local  martingales and:
 \begin{align*}
 M=\Pi_{k\geq 1}M^{(k)}\quad  U=\Pi_{k\geq 1} M^{(2k)},\quad  Z=\Pi_{k\geq 1} M^{(2k-1)}.
 \end{align*}  
 Also $M^{(1)}=M_{T^1\wedge\cdot}$ is a uniformly integrable martingale. So is every $M^{(k)}$, $k>1$:
 \be\label{jump1}
\E[M^{(k+1)}_\infty]=\E\left[\E(M^{(k+1)}_\infty|\G_{T^k})\right]=\E\left[\E(M_{T^{k+1}}|\G_{T^k})\frac{1}{M_{T^k}}\right]=1.
\ee
 We will now prove the stated properties for $U$; identical arguments hold for $Z$. We denote
 \[
 U_\infty:=\lim_{t\to\infty}U_t=U_{T^\infty}.
 \]
 We notice that $U_{T^{2k}}=U_{T^{2k-2}}M^{(2k)}_\infty$ and $\E[U_{T^{2k}}|\F_{T_{2k-2}}]=U_{T^{2k-2}}M^{(2k)}_\infty$, therefore, for all $k\geq 1$:
 \[
 \E[U_{T^{2k}}]=\E[M^{(2k)}_\infty U_{T^{2k-2}}]=\E[U_{T^{2k-2}}]=\cdots= U_0=1.
 \]
This proves that if $T^\infty=+\infty$ $a.s.$, i.e. $(N_t)$ is finite, then $(T^{2k})$ is a localising sequence for $U$.

Therefore,  whenever $(N_t)$ is bounded, then $U$ is a uniformly integrable martingale.  
\finproof

\renewcommand{\refname}{References}

\end{document}